\newtheorem{theorem}{Theorem}[section]
\newtheorem{proposition}[theorem]{Proposition}
\newtheorem{corollary}[theorem]{Corollary}
\theoremstyle{definition}
\newtheorem{definition}[theorem]{Definition}
\newtheorem{example}[theorem]{Example}
\theoremstyle{remark}
\newtheorem{remark}[theorem]{Remark}
\numberwithin{equation}{section}
\newcommand{\FF}{\mathcal{F}}
\newcommand{\MM}{\mathcal{M}}
\newcommand{\PP}{\mathcal{P}}
\newcommand{\QQ}{\mathcal{Q}}
\newcommand{\field}[1]{\mathbb{#1}}
\newcommand{\R}{\field{R}}
\newcommand{\N}{\field{N}}
\newcommand{\Q}{\field{Q}}
\newcommand{\E}{\field{E}}
\renewcommand{\P}{\field{P}}
\newcommand{\ba}{{\rm{ba}}}
\newcommand{\al}{\alpha}
\newcommand{\ep}{\varepsilon}
\renewcommand{\th}{\theta}
\newcommand{\la}{\lambda}
\newcommand{\si}{\sigma}
\newcommand{\om}{\omega}
\newcommand{\Om}{\Omega}
\newcommand{\Bb}{\text{\rm B}_b}
\newcommand{\Max}{\text{\rm{Max}}}
\newcommand{\Min}{\text{\rm{Min}}}
\newcommand{\var}{\text{\rm{var}}}
\newcommand{\VaR}{\text{\rm{V@R}}}
\newcommand{\AVaR}{\text{\rm{AV@R}}}
\newcommand{\Amb}{\text{\rm{Amb}}}
\renewcommand{\ba}{\mathop{\text{\upshape{ba}}}\nolimits}
\newcommand{\mnen}[1]{{#1}}
\newcommand{\erase}[1]{{{#1}}}
\newcommand{\changes}[1]{{{#1}}}
\begin{document}

	\title[Premium principles, risk, and deviation]{A decomposition of general premium principles into risk and deviation}
	
	\author{Max Nendel}
	\address{Center for Mathematical Economics, Bielefeld University}
	\email{max.nendel@uni-bielefeld.de}

	\author{Frank Riedel}
	\address{Center for Mathematical Economics, Bielefeld University}
	\email{frank.riedel@uni-bielefeld.de}		
	
	\author{Maren Diane Schmeck}
	\address{Center for Mathematical Economics, Bielefeld University}
	\email{maren.schmeck@uni-bielefeld.de}

	\thanks{Financial support through the German Research Foundation via CRC 1283 is gratefully acknowledged. The authors thank Hansj\"org Albrecher, Hans-Ulrich Gerber, Marcelo Brutti Righi, Ruodu Wang, and three anonymous referees for their helpful comments and remarks.}

	\date{\today}

	\begin{abstract}
	\noindent  We provide an axiomatic approach to general premium principles in a probability-free setting that allows for Knightian uncertainty. Every premium principle is the sum  of a risk measure, as a generalization of the expected value, and a deviation measure, as a generalization of the variance. One can uniquely identify  a maximal risk measure  and a minimal deviation measure in such decompositions.  We show how previous axiomatizations of premium principles can be embedded into our more general framework. We discuss dual representations of convex  premium principles, and study the consistency of premium principles with a financial market in which insurance contracts are traded. 	
	
        \smallskip
        \noindent \emph{Key words:} Principle of premium calculation, risk measure, deviation measure, convex duality, \mnen{superhedging}

	\smallskip
	\noindent \emph{AMS 2010 Subject Classification:} 91B30; 91G20; 46A20
	\end{abstract}

	\maketitle
	
	\setcounter{tocdepth}{1}


\section{Introduction}

A premium  principle is a map that assigns a premium  $H(X)$   to a loss   $X$, cf.\ B\"uhlmann \cite{MR580671}, Deprez and Gerber \cite{MR797503}\erase{, Young \cite{doi:10.1002/9780470012505.tap027},} or, for textbook references, Albrecher et al.\ \cite{MR3791478}, Kaas et al.~\cite{kaas2008m}, and Rolski et al.~\cite{MR1680267}. The literature usually assumes that the premium principle is law-invariant in the sense that it depends only on the probability distribution of losses, cf.\ Wang et al.\ \cite{MR1604936}.\ There are instances, however, when the probability distribution is not known exactly and when it might not be easy to estimate, due to a lack of stationarity, missing data, absence of a suitable law of large numbers or a central limit theorem (compare Cairns \cite{Cairns2000} for a discussion in an insurance context). From a more practical point of view,  the International Actuarial Association acknowledges the importance of such uncertainty in Chapter 17 of the risk book \cite{riskbook}: '\textit{Risk} is the effect of variation that results from the random nature of the outcomes being studied (i.e. a quantity susceptible of measurement).~\textit{Uncertainty} involves the degree of confidence in understanding the effect of perils or hazards not easily susceptible to measurement.' Model uncertainty is also widely recognized, for example, in the context of life insurance, cf.~Biagini et al.~\cite{MR3388119}, Bauer et al.\ \cite{bauer2010pricing}, Milevsky et al.\ \cite{MPY}, and Schmeck and Schmidli \cite{schmeck2019mortality}.

In this paper, we thus take a more general position, and model insurance claims as measurable functions, thus allowing for Knightian uncertainty or a ``model-free'' setting. In particular, we  do not assume ex ante that the probability distributions of losses are known to the insurer, nor that premium principles are law-invariant. For a class $C$ of bounded claims\footnote{We point out that we allow for claims taking both positive and negative values, where positive values correspond to losses.}, we impose only two very natural conditions on premium principles. We require that there is no unjustified risk loading, i.e.\ a shift of a loss by a known amount is priced correctly, or 
\begin{equation}\label{eq.p1}
  H(X+m)=H(X)+m \quad \text{for all }X\in C\text{ and }m\in \R, \tag{P1}
\end{equation}
compare Deprez and Gerber \cite{MR797503} and Young \cite{doi:10.1002/9780470012505.tap027}. The textbook Kaas et al.~\cite[Section 5.3.1]{kaas2008m} calls Property \eqref{eq.p1}   a consistency condition\footnote{In the context of monetary risk measures, property (P1) is, up to a sign, usually referred to as \textit{cash additivity}, see e.g.~F\"ollmer and Schied \cite{MR3859905}. }.
Our second  requirement has the form
\begin{equation}\label{eq.p2}
 H(X)\geq H(0)=0 \quad \text{for all }X\in C\text{ with }X\geq 0.\tag{P2}
\end{equation}
Condition \eqref{eq.p2} simply states that an insurer will not be willing to pay money for insuring \emph{pure losses}, i.e.~claims with only positive outcomes.  Since typically insurance claims have only positive outcomes, one could, loosely speaking, interpret \eqref{eq.p2} as a condition stating that premium principles are always nonnegative -- a standard requirement\erase{, see e.g. Young \cite{doi:10.1002/9780470012505.tap027}}. From now on, the term \textit{premium principle} refers to a map that fulfills the Axioms \eqref{eq.p1} and \eqref{eq.p2}.

Our first main result shows that every premium principle can be written as
the sum of  a monetary risk measure $R$ (compare, e.g., F\"ollmer and Schied \cite{MR3859905}) and a deviation measure $D$ (compare Rockafellar and Uryasev \cite{MR3103448})\footnote{\changes{We also refer to Liu et al.~\cite{MR4034687} for an overview on convex risk functionals, a class containing both risk and deviation measures, and to Righi \cite{MR4019241} for a detailed discussion on compositions between risk and deviation measures.}}. Therefore, the simple axioms \eqref{eq.p1} and \eqref{eq.p2} are sufficient to provide a lot of structure to premium principles.  In the classic case, when the probability distribution is known, a typical insurance premium consists of the sum of the fair premium and a multiple of the variance or standard deviation. As the expected loss is a risk measure and the variance a deviation measure, one can think of premium principles as generalizations of this basic approach.

Classic premium principles like the aforementioned variance or standard deviation principle or well-known economic principles can be subsumed under our framework, and we discuss how they can be naturally  generalized  to Knightian uncertainty. We also review more modern notions of quantile-based premia involving Value at Risk or Expected Shortfall, cf.~Rolski et al.~\cite[Section 3.1.3]{MR1680267} and Kaas et al.~\cite[Section 5.6]{kaas2008m}.

It is natural to ask in what sense the risk and the deviation measure can be identified uniquely. In general, this is not the case. However, we show that the premium principle can be uniquely decomposed  into a \emph{maximal} risk measure $R_\Max$ (capturing all risky components of the insurance claim) and a \emph{minimal} deviation measure $D_\Min$ measuring the claim's pure fluctuations. 
The risk measure $R_\Max$ solves a variational problem that is, at least in spirit, akin to the idea of superhedging in finance because  it computes the minimal premium  for a claim $X_0 \in C$ that covers the loss $X$ in every state of the world.  The maximal risk measure  can be  defined on the whole space of claims; the theorem thus provides a natural algorithm to assess the riskiness of arbitrary insurance claims given the available portfolio of contracts that the insurer is able to price.
The minimal deviation, i.e.~the difference between the premium and the maximal risk measure can be seen as a margin for compensating the parts of the claim that cannot be quantified as pure risk\footnote{In a different setup, the existence of a greatest monotone function majorized by a given function $f$ has been discussed  by Kupper and Filipovi\'{c} \cite{MR2324561} and Maccheroni et al.\ \cite{MR2536871}.}.

In a second step, we assume that, in addition to \eqref{eq.p1} and \eqref{eq.p2}, the premium principle $H$ is convex or sublinear. We derive a dual representation of $R_\Max$ in terms of the Fenchel-Legendre transform of $H$. In the sublinear case, we show that there exists a maximal set $\PP$ of probability measures (priors) satisfying
\begin{equation}\label{eq.riskloading}
 H(X)\geq \E_\P(X)\quad \text{for all }X\in C \text{ and }\P\in \PP.
\end{equation}
The latter can be seen as a generalized version of a safety loading, see Castagnoli et al.~\cite{MR1932757}\erase{\ and Young \cite{doi:10.1002/9780470012505.tap027}}. If $\PP=\{\P\}$ consists of a single prior, one ends up with the classical safety loading. In view of equation \eqref{eq.riskloading}, the set $\PP$ can be seen as the set of all priors that are covered by the premium principle in the sense that the premium principle avoids bankruptcy under each model $\P\in \PP$. We refer to $\PP$ as the set of all plausible models. 

In order to cover nonmonotone standard approaches like the variance principle, we do not assume premium principles to be monotone ex ante. However, depending on the context one can, of course, add the condition of monotonicity, or the related no-ripoff condition (cf. Deprez and Gerber \cite{MR797503}, Kaas et al.~\cite[Section 5.3.1]{kaas2008m}, or Young \cite{doi:10.1002/9780470012505.tap027}),
\begin{equation}\label{eq.no-ripoff}
 H(X)\leq \sup X\quad \text{for all }X\in C.
\end{equation}
As we discuss below, monotonicity leads to a vanishing minimal deviation measure. Moreover, we provide a minimal axiomatization of convex monotone premium principles in the spirit of Deprez and Gerber \cite{MR797503} using the no-ripoff condition \eqref{eq.no-ripoff} instead of Axiom \eqref{eq.p2} (see Proposition \ref{lem.sublinearmon}).

 Castagnoli et al.~\cite{MR1932757} introduce an insurance premium of the form   \begin{equation}\label{eq.castagnoli}
 H(X)= \mathbb E_\P(X)+\Amb_\PP(X)
\end{equation}
for a fixed baseline model $\P\in \PP$ and an ambiguity index 
\begin{equation}\label{eq.ambmeas}
 \Amb_\PP(X):=\frac{1}{2}\sup_{\Q,\Q'\in \PP}\mathbb E_\Q(X)-\mathbb E_{\Q'}(X).
\end{equation}
They impose the no-ripoff condition \eqref{eq.no-ripoff}; we show that this condition  implies that the set of priors $\PP$ is dominated by the reference measure $\P$ (see Proposition \ref{prop.monamb}). It is thus important to allow for a more general approach without invoking the no-ripoff condition if one wants to cover undominated sets of priors when considering a premium taking into account an ambiguity measure of the form \eqref{eq.ambmeas}. Such models appear naturally if one models uncertainty about the volatility of diffusions. We refer to Remark \ref{rem.cast} for more details.

Last not least, we discuss extensions of our basic model. Section 4 discusses the important property of law-invariance. We show that the maximal risk measure  inherits law invariance from the premium principle. Moreover, suitably continuous, convex, and law-invariant premia always carry a safety loading. 
We also discuss the relation of the maximal risk measure to superhedging in presence of a competitive market that is used by the insurer to hedge against certain risks using portfolios or securization products that are traded in the market. In the spirit of F\"ollmer and Schied \cite{MR1932379}, we derive equivalent conditions ensuring that the premium principle is consistent with superhedging.
We conclude by showing how our results extend to unbounded claims and more classic setups, where claims are identified via their distributions using the increasing (convex) order instead of the pointwise order.

The paper is structured as follows. In Section \ref{sec.premium}, we introduce the setup and notations, provide the decomposition of a premium principle into risk and deviation, give an explicit description of the maximal risk measure $R_\Max$, and discuss various examples illustrating our notion of a premium principle. Section \ref{sec.dualrep} is devoted to the study of convex and sublinear premium principles. In this context, we discuss dual representations, multiple priors, and baseline models. In Section \ref{sec.4}, we  discuss law invariance, extensions of the model to unbounded claims, and consistency with financial markets. We summarize the main contribution of the paper in Section \ref{sec.conclusion}. The proofs can be found in the Appendix \ref{App.A}.

\section{Premium principles and their decompositions}\label{sec.premium}

\subsection{Model and Notation}
Let  $(\Om,\FF)$ be a measurable space. Denote the space of all bounded, real-valued measurable functions   by $\Bb=\Bb(\Om,\FF)$.  Let $C\subset \Bb$ represent the set of insurance claims covered by a premium policy.~We assume that $0\in C$ and that $X+m\in C$ for all $X\in C$ and $m\in \R$, where, in the notation, we do not differentiate between real constants and constant functions (with real values). Thus, we also consider claims with possibly negative values. We call every measurable function $X\in \Bb$ a \emph{claim}. We use the notation
\[
 \sup X:=\changes{\sup_{\om\in \Om}} X(\om)\quad \text{and}\quad \inf X:=\changes{\inf_{\om\in \Om}} X(\om).
\]
We denote by $\leq$ both the usual order on the reals and the pointwise order on $\Bb$.

\subsection{Premium Principles and a Basic Decomposition}

The central object in our analysis is the following notion of a premium principle.

\begin{definition}\label{def.premium}
 We say that a map $H\colon C\to \R$ is a \textit{premium principle} on $C$ if
 \begin{enumerate}
  \item[(P1)] $H(X+m)=H(X)+m$ for all $X\in C$ and $m\in \R$.
  \item[(P2)] $H(0)=0$ and $H(X)\geq 0$ for all $X\in C$ with $X\geq 0$.
 \end{enumerate}
\end{definition}

Condition (P1) together with $H(0)=0$ implies the common assumption of \textit{no unjustified risk loading}, i.e. $H(m)=m$ for all constant claims $m\in \R$, cf.~Deprez and Gerber \cite{MR797503}\erase{\ and Young \cite{doi:10.1002/9780470012505.tap027}}. Concerning  Property (P2),  note that the condition  $H(0)=0$ is natural for insurance claims.  A typical policy insures \emph{losses} in the sense that the claim is either zero or positive. (P2) ensures that the company or the market take a nonnegative premium for sure losses. It is thus a minimal requirement for a sensible notion of a premium policy.\\

We briefly recall the notion of a (monetary) risk measure and refer to F\"ollmer and Schied \cite{MR3859905} for a detailed discussion on the latter. Since we identify losses with positive claims, we choose a different sign convention than in \cite{MR3859905}, and call a map \changes{$R\colon \Bb\to \R$} a \textit{(monetary) risk measure} if
 \begin{enumerate}
 \item[(R1)] $R(X+m)=R(X)+m$ for all $X\in \Bb$ and $m\in \R$,
 \item[(R2)] $R(0)=0$ and $R(X)\leq R(Y)$ for all $X,Y\in \Bb$ with $X\leq Y$.
 \end{enumerate}
A map $D\colon C\to \R$ is a \textit{deviation measure} (cf.~Rockafellar-Uryasev \cite{MR3103448})  if 
 \begin{enumerate}
  \item[(D1)] $D(X+m)=D(X)$ for all $X\in C$ and $m\in \R$,
  \item[(D2)] $D(0)=0$ and $D(X)\geq 0$ for all $X\in C$.
 \end{enumerate}
 
 Let $R\colon \Bb\to \R$ be a risk measure and $D\colon C\to \R$ be a deviation measure. One easily verifies that the sum 
 \[
  H(X):=R(X)+D(X), \quad \text{for }X\in C,
 \]
 defines a premium principle on $C$. In fact this decomposition into a monetary risk measure and a deviation measure characterizes all premium principles as the following theorem shows. 
 
\begin{theorem}\label{cor.main}
 A map $H\colon C\to \R$ is a premium principle if and only if \begin{equation}\label{eq.decompriskdev} H(X)= R(X)+D(X)\quad \text{for all }X\in C,\end{equation} where \changes{$R\colon \Bb\to \R$} is a risk measure and $D\colon C\to \R$ is a deviation measure. 
\end{theorem}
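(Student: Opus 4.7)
The ``if'' direction is a direct verification: given a risk measure $R$ on $\Bb$ and a deviation measure $D$ on $C$, the map $H:=R+D$ satisfies (P1) because translation invariance adds the shift from $R$ while $D$ is cash-invariant, and (P2) holds since $R(0)=D(0)=0$ and monotonicity of $R$ together with nonnegativity of $D$ give $(R+D)(X)\geq R(0)+0=0$ whenever $X\geq 0$.

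For the ``only if'' direction, the plan is to write down the superhedging-type construction that the introduction already previews. For $X\in \Bb$ define
\[
 R_\Max(X):=\inf\bigl\{H(X_0):X_0\in C,\ X_0\geq X\bigr\}.
\]
First I would check that $R_\Max$ takes real values on $\Bb$. Since $\sup X\in \R\subseteq C$ and $\sup X\geq X$, the infimum is at most $\sup X<\infty$. For the lower bound, given any admissible $X_0\in C$ with $X_0\geq X$, translate: $X_0-\inf X\in C$ is nonnegative, so (P2) gives $H(X_0-\inf X)\geq 0$, and then (P1) yields $H(X_0)\geq \inf X$; hence $R_\Max(X)\geq \inf X>-\infty$.

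Next I would verify the axioms for a risk measure. Axiom (R1) follows from the substitution $Y_0=X_0-m$ together with (P1), which turns the infimum defining $R_\Max(X+m)$ into $R_\Max(X)+m$. Axiom (R2) is immediate because $X\leq Y$ shrinks the feasible set in the infimum. The normalization $R_\Max(0)=0$ holds since $X_0=0$ is admissible with $H(0)=0$, while any nonnegative $X_0\in C$ has $H(X_0)\geq 0$ by (P2).

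Finally I would set $D_\Min(X):=H(X)-R_\Max(X)$ for $X\in C$ and verify the axioms of a deviation measure. Cash-invariance (D1) is inherited from (P1) and (R1). The normalization $D_\Min(0)=0$ follows from $H(0)=R_\Max(0)=0$. Nonnegativity in (D2) is the crucial algebraic observation and is completely free: for $X\in C$, the claim $X$ itself is admissible in the infimum defining $R_\Max(X)$, hence $R_\Max(X)\leq H(X)$. By construction $H=R_\Max+D_\Min$ on $C$, which establishes the decomposition. There is no real obstacle in this argument; the only conceptual step is to recognise the superhedging formula as the right candidate, everything else is bookkeeping with (P1) and (P2).
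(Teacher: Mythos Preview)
Your proposal is correct and follows essentially the same approach as the paper: the paper also verifies the ``if'' direction directly (in the paragraph preceding the theorem) and obtains the ``only if'' direction from the superhedging construction $R_{\Max}(X)=\inf\{H(X_0):X_0\in C,\ X_0\ge X\}$, checking well-definedness, (R1)--(R2), and then setting $D_{\Min}=H-R_{\Max}$. The only organisational difference is that the paper packages this construction into a separate theorem (Theorem~\ref{thm.main}) which also records the maximality of $R_{\Max}$, and then deduces Theorem~\ref{cor.main} from it; your argument is the same computation inlined.
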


The theorem shows that premium principles can be decomposed into a \emph{net premium}, given in terms of a risk measure, and a \emph{safety loading} that compensates the insurer for the variability of the damage. In the classic case, when a prior probability distribution $\P$ is given, the typical premium consisting of the sum of the expected loss $ E_\P(X)$ and (a multiple of) the variance of $X$ under $\P$ is a case in point. Note that the expected loss is a risk measure and the variance a deviation measure.\\

It is natural to ask in what sense the risk and the deviation measure can be identified uniquely. In general, there exists a multitude of such decompositions, cf. Section \ref{sec.examples} and Section \ref{sec.dualrep} below. The following theorem shows that the premium principle can be uniquely decomposed into a maximal risk measure $R_\Max$, capturing all risky components of the insurance claim, and a minimal deviation measure $D_\Min$, constituting a part of the premium that cannot be justified by any risk measure.

\begin{theorem}\label{thm.main}\
Let $H\colon C\to \R$ be a premium principle. Define 
\[
  R_\Max(X):=\inf\big\{H(X_0)\, \big|\, X_0\in C,\, X_0\geq X\big\},\quad \text{for }X\in \Bb.
\]
 Then, the map $R_\Max\colon \Bb\to \R$ defines a risk measure, and $R_\Max(X)\leq H(X)$ for all $X\in C$.~Moreover, $D_\Min(X):=H(X)- R_\Max(X)$ defines a deviation measure on $C$, and
 \[
  H(X)=R_\Max(X)+D_\Min(X)\quad \text{for all }X\in C.
 \]
For every other decomposition of the form $H(X)=R(X)+D(X)$, for $X\in C$, with a risk measure $R$ and a deviation measure $D$, we have $R \le R_\Max$ and $D \ge D_\Min$.
\end{theorem}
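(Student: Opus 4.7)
The plan is to verify the assertions directly from the definition of $R_\Max$ and then obtain maximality by a short monotonicity argument. The key conceptual point is that $C$ contains all constants, so for every $X \in \Bb$ the constant $\sup X$ lies in $C$ and dominates $X$; this makes the infimum defining $R_\Max(X)$ meaningful and bounded above by $\sup X$. For a lower bound on $R_\Max(X)$, I would observe that for any admissible $X_0 \in C$ with $X_0 \geq X$, the shifted claim $X_0 - \inf X \in C$ is nonnegative, so (P2) and (P1) give $H(X_0) = H(X_0 - \inf X) + \inf X \geq \inf X$. Hence $R_\Max(X) \in [\inf X, \sup X]$.

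Next, I would check axioms (R1) and (R2) for $R_\Max$. Property (R1) follows by the substitution $X_0 \mapsto X_0 + m$ in the defining infimum together with (P1) and the fact that $C$ is closed under adding constants. Monotonicity (R2) is immediate: if $X \leq Y$, then the feasible set $\{X_0 \in C : X_0 \geq Y\}$ is contained in $\{X_0 \in C : X_0 \geq X\}$. The normalization $R_\Max(0) = 0$ comes from taking $X_0 = 0$ together with (P2). The inequality $R_\Max(X) \leq H(X)$ on $C$ is obtained by choosing $X_0 = X$. From here, $D_\Min := H - R_\Max$ inherits nonnegativity from this inequality; its translation invariance (D1) follows because the translation terms cancel; and $D_\Min(0) = 0$ is automatic. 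So $D_\Min$ is a deviation measure on $C$ and the decomposition $H = R_\Max + D_\Min$ holds by construction.

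For maximality, suppose $H = R + D$ is any decomposition with $R$ a risk measure on $\Bb$ and $D$ a deviation measure on $C$. Given $X \in \Bb$ and any admissible $X_0 \in C$ with $X_0 \geq X$, monotonicity of $R$ gives $R(X) \leq R(X_0)$, while nonnegativity of $D$ gives $R(X_0) = H(X_0) - D(X_0) \leq H(X_0)$. Taking the infimum over $X_0$ yields $R(X) \leq R_\Max(X)$; the companion inequality $D(X) \geq D_\Min(X)$ on $C$ then follows by subtracting from $H(X) = R(X) + D(X) = R_\Max(X) + D_\Min(X)$. There is no genuine obstacle in this proof; the only point that deserves care is the finiteness of $R_\Max$, which relies crucially on (P1), (P2), and the standing assumption that $C$ is closed under addition of real constants.
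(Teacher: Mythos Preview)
Your proof is correct and follows essentially the same route as the paper: verify well-definedness and the bounds $\inf X\le R_\Max(X)\le \sup X$ using (P1)--(P2) and $\R\subset C$, check (R1), (R2) and $R_\Max(0)=0$ directly from the defining infimum, read off the properties of $D_\Min$, and obtain maximality by combining monotonicity of $R$ with $R\le H$ on $C$. The only cosmetic difference is that you get cash additivity of $R_\Max$ in one step via the bijection $X_0\mapsto X_0+m$ of feasible sets, whereas the paper argues the two inequalities separately.
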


Theorem \ref{thm.main} shows that one can identify uniquely a \emph{maximal} risk measure and a \emph{minimal} deviation measure whose sum forms the premium principle. 
The risk measure $R_\Max$ solves a variational problem that is, at least in spirit, akin to the idea of superhedging in finance because  it computes the minimal premium that one has to pay for a claim $X_0 \in C$ that covers the loss given by $X$ in every state of the world.  Note that  the maximal risk measure is defined on the whole space of claims $\Bb$; the theorem thus provides a natural algorithm to assess the riskiness of arbitrary insurance claims given the available set of contracts that the insurer is able to price with the premium principle $H$.  

\begin{remark}
We   briefly point out how the decomposition into risk and deviation relates to the monotonicity of premium principles. Notice that, by definition, a premium principle $H$ is \textit{monotone}, in the sense that $H(X)\leq H(Y)$ for all $X,Y\in C$ with $X\leq Y$, if and only if $H$ is a risk measure.\ Therefore, a premium principle $H$ is monotone if and only if $H(X)=R_\Max(X)$ for all $X\in C$, or, equivalently, if $D_\Min(X)=0$ for all $X\in C$. However, this does \textit{not} exclude the existence of a nontrivial decomposition into risk and deviation of the form \eqref{eq.decompriskdev}, i.e.\ there do exist monotone premium principles of the form \eqref{eq.decompriskdev} with $D(X)\neq 0$ for all nonconstant claims $X\in C$, cf.\ Example \ref{ex.meanabs} below. 
\end{remark}

\subsection{Examples}\label{sec.examples}

We illustrate how classic and new approaches of insurance pricing can be subsumed under our framework.

\subsubsection{Classic Premium Principles under a Given Probabilistic Model}

\begin{example}[Ad hoc  premium principles under a given model]
The benchmark premium principle is the \textit{fair premium principle} given by
\[
 H(X)=\E_\P(X), \quad \text{for }X\in \Bb,
\]
where $\P$ is a fixed probability measure on $(\Om,\FF)$. Here, $R_\Max=\E_\P(\,\cdot\,)$ and $D_\Min=0$. In practice, since the fair premium  contains no premium for taking risk, insurers usually add a safety loading, e.g.~in terms of the variance, leading to the \textit{variance principle}
 \[
  H(X)=\E_\P(X)+\frac{\theta}{2} \var_\P(X),\quad \text{for }X\in \Bb,
 \]
 with a constant $\theta \geq 0$.  Here, $R=\E_\P(\, \cdot \,)$ and $D=\frac{\theta}{2}\var_\P(\, \cdot\, )$ is a decomposition of $H$ into risk and deviation. However, as we will see in Example \ref{ex.subldevmeas}, for $\theta>0$, the maximal risk measure $R_\Max$ is given by
 \[
  R_\Max(X)=\max_{\Q\in \PP} \E_\Q(X)-\frac{1}{2\theta} G(\Q|\P),
 \]
 where $\PP$ consists of all probability measures $\Q$ that are absolutely continuous with respect to $\P$ and satisfy 
 \[
  G(\Q|\P):=\var_\P\bigg(\frac{{\rm d}\Q}{{\rm d}\P}\bigg)<\infty.
 \]
 $G$ is the so-called \textit{Gini concentration index}, see e.g.~Maccheroni et al.~\cite{MR2268407},\cite{MR2536871}.
\end{example}

 \begin{example}[Mean absolute deviation]\label{ex.meanabs}
 An example for a monotone premium principle allowing for a nontrivial decomposition \eqref{eq.decompriskdev} with $D\neq 0$ is given by
 \[
  H(X):=\E_\P(X)+\theta \E_\P\big(\big|X-\E_\P(X)\big|\big), \quad \text{for }X\in \Bb,
 \]
 with a fixed probability measure $\P$ on $(\Om,\FF)$ and $0\leq \theta \leq \frac{1}{2}$. In fact, using the inverse triangle inequality, for $X\in \Bb$ with $X\leq 0$, we obtain 
 \begin{align*}
  H(X)&=(1-\theta )\E_\P(X)+\theta\Big( \E_\P\big(\big|X-\E_\P(X)\big|\big)-\E_\P(|X|)\Big)\\
  &\leq (1-\theta )\E_\P(X)+\theta \E_\P(|X|)=(1-2\theta)\E(X)\leq 0.
 \end{align*}
In Proposition \ref{lem.sublinearmon} below, we show that  $H$ is a monotone premium principle and that the maximal decomposition from Theorem \ref{thm.main} is given by $H(X)=R_\Max(X)$ for $X\in \Bb$.  We conclude that the maximal decomposition can be quite different from the intuitive definition of the premium principle that we started with. 
 
 More generally, one can also consider so-called $L^p$-deviation principles, cf.\ Kupper and Filipovi\'{c} \cite[Section 5.1]{MR2324561}. These are given by 
 \[
  H(X):=\E_\P(X)+\theta \E_\P\big(\big|X-\E_\P(X)\big|^p\big)^{1/p},\quad \text{for }X\in \Bb,
 \]
 where $\P$ is again a fixed probability measure on $(\Om,\FF)$, $\theta\geq 0$, and $p\in [1,\infty)$. Here,
 \[
  R_\Max(X)=\max_{\Q\in \PP} \E_\Q(X),\quad \text{for }X\in \Bb,
 \]
 where $\PP$ consists of all probability measures $\Q$ on $(\Om,\FF)$ with $\frac{{\rm d}\Q}{{\rm d}\P}=1+\theta\big(Z-\E(Z)\big)$ for some measurable $Z\in \Bb$ with $\E_\P\big(|Z|^{p/(p-1)}\big)\leq 1$, for $p>1$, and $\sup |Z|\leq 1$, for $p=1$, cf.\ \cite[Proposition 5.1]{MR2324561}.
\end{example}

\begin{example}[Economic premium principles]
 Let $\P$ be a probability measure on $(\Om,\FF)$ and $\ell\colon \R\to \R$ be a nondecreasing and continuously differentiable \emph{loss function}. Considering a random initial endowment $Z\in \Bb$ that could be interpreted as an existing portfolio of insurance contracts, the premium $p:=H(X)$ for an insurance claim $X\in \Bb$ is computed by requiring that the new insurance contract together with the premium $p$ (infinitesimally) does not change the expected loss. This can be expressed by the equation
 \[
  0=\lim_{h\to 0}\frac{\E_\P\big(\ell[Z+h(X-p)]\big)-\E_\P\big(\ell(Z)\big)}{h}=\E_\P\big(\ell'(Z)\cdot (X-p)\big),
 \]
 which results in the so-called \textit{economic premium principle} (see e.g.\ B\"uhlmann \cite{MR580671} and Deprez and Gerber \cite{MR797503})
 \[
  H(X)=\frac{\E_\P\big(X\ell'(Z)\big)}{\E_\P\big(\ell'(Z)\big)},\quad \text{for }X\in \Bb.  
 \]
 Note that this leads again to the mean value principle,  yet under a new measure $\Q$ whose density with respect to the reference probability $\P$ is proportional to the marginal loss $\ell'(Z)$. We can thus write $H(X)=\E_\Q(X)$ and thus, we have $R_\Max(X)=\E_\Q(X)$ for all $X\in \Bb$. For $\P$-a.s.~constant $Z$, the probability measure $\Q$ coincides with $\P$.   
 \end{example}

\subsubsection{Model Uncertainty}

The recent history brought the issue of model uncertainty to center stage; in particular, it has become clear that working under the assumption of a single probability distribution can be too optimistic for insurance companies. New regulations thus ask insurers to take various models into account (stress testing).

\begin{example}[Model uncertainty]
 Instead of a single probability measure $\P$ on $(\Om,\FF)$, we now consider a nonempty set $\PP$ of probability measures on $(\Om,\FF)$. The set $\PP$ can be seen as a set of plausible models, and we thus end up with a setup, where we have model uncertainty w.r.t.\ the models contained in $\PP$. Then, one can consider robust versions of the aforementioned premium principles by regarding worst case scenarios. Examples include:
 \begin{enumerate}[(i)]
  \item A robust variance principle
 \[
  H(X)=\sup_{\P\in \PP}\E_\P(X)+\theta \sup_{\P\in \PP}\var_\P(X),\quad \text{for }X\in \Bb,
 \]
 with $\theta \geq 0$. 
 \item Maxmin expected loss (cf. Gilboa and Schmeidler \cite{MR1000102})
 \[
  H(X):=\sup_{\P\in \PP}\E_\P\big(\ell(X-\inf X)\big)+\inf X,\quad \text{for }X\in \Bb,
 \]
 with a nondecreasing loss function $\ell\colon \R\to \R$.
 \item We now describe a second-order approach to parameter uncertainty in the spirit of  the so-called \textit{smooth ambiguity model}, cf.~Klibanoff et al.~\cite{MR2171327}. Fix a Polish space $\Omega$ endowed with the Borel $\si$-algebra $\FF$, and let  a probability measure $\mu\colon \Sigma\to [0,1]$, the second-order prior, describe the plausibility of a specific probabilistic model $\P\in\PP$, where $\Sigma=\Sigma(\PP)$ denotes the Borel $\si$-algebra on $\PP$ endowed with the vague topology. In the simplest case, where $\ell(x)=\phi(x)=x$, this corresponds to a Bayesian second-order model. This approach can be modified by considering a \changes{continuous nondecreasing} loss function $\ell\colon \R\to \R$ and another \changes{nondecreasing} function $\phi\colon \R\to \R$ that measures the insurer's aversion to model uncertainty. For losses $X\in {\rm C}_b$ with $\inf X=0$, we set 
 \[
  H(X):=\int_\PP \phi\big(\E_\P\big[\ell(X)\big]\big) \, \mu({\rm d}\P),
 \]
 \changes{where ${\rm C}_b$ denotes the space of all continuous and bounded functions $\Om\to \R$.}
 We extend $H$ for arbitrary claims $X\in {\rm C}_b$ with lower bound $m=\inf X$ by setting $H(X)=H(X-m)+m$.
 \end{enumerate}
\end{example}

\begin{example}[Ambiguity indices]
Consider a nonempty set $\PP$ of probability measures on $(\Om,\FF)$ that describes the models that the insurer is willing to consider.  In contrast to the previous example, we now fix a reference model $\P\in \PP$, which can be seen as the (due to some case-dependent reasons) most plausible model.\ The idea is now to consider a safety loading that distinguishes  risk from model uncertainty. The loading for  risk could then, for example, be given by the variance or an $L^p$-deviation measure, whereas the loading for (model) uncertainty is given by 
  \[
   \Amb_\PP(X):=\frac{1}{2}\sup_{\Q,\Q'\in \PP} \E_\Q(X)-\E_{\Q'}(X),\quad \text{for }X\in \Bb.
 \]
 $   \Amb_\PP(X)$ is a measure for the maximal variation of fair premia under the variety of models $\PP$, cf.~Castagnoli et al.~\cite{MR1932757}.
 In Section \ref{sec.dualrep} below, we investigate the case where \begin{equation}\label{eq.examb}
  H(X)=\E_\P(X)+\theta \Amb_\PP(X),\quad \text{for }X\in \Bb,
 \end{equation}
  with $\theta\geq 0$ and $\Amb_\PP$ as a compensation for model uncertainty.
\end{example}

\begin{example}[Quantile-based premium principles]
 Let $\ep \in (0,1)$, $\P$ be a probability measure on $(\Om,\FF)$, and
 \[
  \P_X^{-1}(\la):=\inf\{a\in \R\, |\, \P(X\leq a)\geq \lambda\}, \quad \text{for }X\in \Bb\text{ and }\la\in (0,1).
 \]
 Then, we could consider the \emph{$\ep$-quantile principle}, cf.~Rolski et al.~\cite[Section 3.1.3]{MR1680267} or Kaas et al.~\cite[Section 5.6]{kaas2008m}, 
 \[
  H(X)= \VaR_\P^\ep(-X)=\P_{X}^{-1}(1-\ep), \quad \text{for }X\in \Bb,
 \]
 as a possible premium principle, where $\VaR_\P^\ep$ is also known as the \textit{value at risk} under $\P$ at level $\ep$, cf.~F\"ollmer and Schied \cite{MR3859905}. Here, $R(X)=\VaR_\P^\ep(-X)$ and $D(X)=0$ for all $X\in \Bb$. A major drawback of value at risk is that it is not convex and thus does not reflect diversification effects. Therefore, one often considers the \textit{expected shortfall} or \textit{average value at risk} $\AVaR_\P^\ep$ at level $\ep$, given by
 \[
  \AVaR_\P^\ep(X):=\frac{1}{\ep}\int_0^\ep \VaR_\P^\gamma(X)\, {\rm d}\gamma, \quad \text{for }X\in \Bb.
 \]
 $\AVaR_\P^\ep$ is convex and positive homogeneous, cf. F\"ollmer and Schied \cite{MR3859905}. Alternatively, for $\theta\geq 0$, one can consider the so-called \textit{absolute deviation principle}, cf.~Rolski et al.~\cite[Section 3.1.3]{MR1680267},
 \[
  H(X)=\E_\P(X)+\th \E_\P\left(\left|X-\P_X^{-1}\big(\tfrac12\big)\right|\right),\quad \text{for }X\in \Bb,
 \]
 as a modification of the standard deviation principle. In this case, $R(X)=\E_\P(X)$ and $D(X)=\th \E_\P\big(\big|X-\P_X^{-1}(1/2)\big|\big)$, for $X\in \Bb$, is a decomposition into risk and deviation. Note that
 \[
  D(X)=\frac\theta{2}\left(\AVaR_\P^{\frac12}(-X)+\AVaR_\P^{\frac12}(X)\right)=\theta \Amb_{\QQ_2}(X), \quad \text{for }X\in \Bb,
 \]
 is (up to a constant) an ambiguity index, where $\QQ_2$ consists of all probability measures $\Q\ll\P$ whose density $\frac{{\rm d}\Q}{{\rm d}\P}$ is $\P$-a.s.~bounded by $2$, cf.\ Example \ref{ex.subldevmeas1} below. In it, we also show that, for $\theta\geq 1$, the maximal risk measure $R_\Max$ is given by 
 $$R_\Max(X)=\AVaR_\P^{\frac{1}{1+\th}}(-X),\quad \text{for }X\in \Bb.$$
\end{example}

\begin{example}[Choquet integrals]
Wang et al.\ \cite{MR1604936} derive a representation of premium principles under the assumption of law-invariance, i.e.\ when the probability distribution of losses is known. Consider a  capacity $\gamma$, i.e.\ a set function $\gamma\colon \FF\to [0,1]$ with $\gamma(\emptyset)=0$, $\gamma(\Omega)=1$, and $\gamma(A)\leq \gamma(B)$ for all $A,B\in \FF$ with $A\subset B$. Then, we consider the premium principle given by the \textit{Choquet integral} w.r.t.~$\gamma$
\[
 H(X):=\int_{\inf X}^\infty \gamma\big(\{X> t\}\big)\, {\rm d}t+\inf X,\quad \text{for }X\in \Bb.
\]
Wang et al.\ \cite{MR1604936} show that   when the reference probability $\P$ is fixed, and certain other axioms are satisfied,  every premium principle $H$ can be represented as a Choquet integral w.r.t.~a \textit{distorted probability} $\gamma=g\circ \P$ for   a \textit{distortion function} $g$ (a nondecreasing function on $[0,1]$ with $g(0)=0$ and $g(1)=1$). In this case,
\[
 H(X)=\int_{\inf X}^\infty g\big(\P_X(t)\big)\, {\rm d}t+\inf X,\quad \text{for }X\in \Bb,
\]
where $\P_X(t):=\P(X> t)$ for $t\geq 0$. The premium principle $H$ is monotone, and we obtain $H(X)=R_\Max(X)$ for all $X\in \Bb$.\end{example}

\section{Dual representation of convex premium principles and baseline models}\label{sec.dualrep}

Premium principles should generally reflect the benefits of diversification and the aversion to uncertainty. In this section, we thus consider \emph{convex} premium principles, generalizing the approach of \cite{MR797503} who assume that the probability distribution of claims is known. We  identify the maximal risk measure in the premium's decomposition as a convex risk measure, cf. F\"ollmer and Schied \cite{MR1932379}. Throughout this section, we assume that $C$ is a linear space with $\R\subset C$. We denote the set of all finitely additive probability measures on $(\Om,\FF)$ by $\ba_+^1$. We say that a premium principle $H\colon C\to \R$ is \emph{convex} if
$$H(\la X+(1-\la )Y)\leq \la H(X)+(1-\la)H(Y)\quad\text{for all }\la\in [0,1]\text{ and }X,Y\in C.$$
In this case, we denote the \textit{convex dual} of $H$ by
\[
  H^*(\P):=\sup_{X\in C} \E_\P(X)-H(X)\in [0,\infty],\quad \text{for }\P\in \ba_+^1.
\]
We have the following theorem, which is a partial extension of \cite[Theorem 4.2]{MR2324561}.
\begin{theorem}\label{thm.convex}
 Let $H\colon C\to \R$ a convex premium principle. Then, the maximal risk measure 
 $R_\Max$ in the decomposition of $H$ satisfies 
 \[
  R_\Max(X)=\max_{\P\in \ba_+^1} \E_\P(X)-H^*(\P)\quad \text{for all }X\in \Bb.
 \]
 Moreover,
  \begin{equation}\label{eq.convexmain}
  H^*(\P)=\sup_{X\in \Bb} \E_\P(X)-R_\Max(X)\quad \text{for all }\P\in \ba_+^1.
 \end{equation}
\end{theorem}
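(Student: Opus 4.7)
The plan is to combine classical Fenchel-Moreau duality for convex monetary risk measures on $\Bb$ with an identification of the two conjugates $R_\Max^*$ and $H^*$. The key preliminary observation is that $R_\Max$ inherits convexity from $H$: given $X,Y\in \Bb$ and $\la\in[0,1]$, any $X_0,Y_0\in C$ with $X_0\geq X$ and $Y_0\geq Y$ satisfy $\la X_0+(1-\la)Y_0\in C$ (since $C$ is a linear space) and dominate $\la X+(1-\la)Y$, so convexity of $H$ yields $H(\la X_0+(1-\la)Y_0)\leq \la H(X_0)+(1-\la)H(Y_0)$; passing to infima makes $R_\Max$ convex. Combined with the monetary risk measure axioms from Theorem \ref{thm.main}, this turns $R_\Max$ into a convex monetary risk measure on all of $\Bb$.

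Next, I would invoke the standard dual representation for convex monetary risk measures on $\Bb$ (cf.\ F\"ollmer--Schied \cite{MR3859905}): setting $R_\Max^*(\P):=\sup_{X\in \Bb}\E_\P(X)-R_\Max(X)$ for $\P\in \ba_+^1$, one has
\[
 R_\Max(X)=\max_{\P\in \ba_+^1}\E_\P(X)-R_\Max^*(\P),
\]
with the supremum actually attained thanks to the weak-$*$ compactness of $\ba_+^1$ inside $\ba(\Om,\FF)$. It then remains to identify $R_\Max^*=H^*$ on $\ba_+^1$. On the one hand, since $R_\Max\leq H$ on $C$ by Theorem \ref{thm.main}, one has $R_\Max^*(\P)\geq \sup_{X\in C}\E_\P(X)-R_\Max(X)\geq H^*(\P)$. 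On the other hand, for any $X\in \Bb$ and any $X_0\in C$ with $X_0\geq X$, positivity of $\P$ gives $\E_\P(X)\leq \E_\P(X_0)$, and therefore $\E_\P(X)-H(X_0)\leq \E_\P(X_0)-H(X_0)\leq H^*(\P)$; taking the infimum over such $X_0$ and then the supremum over $X\in \Bb$ yields $R_\Max^*(\P)\leq H^*(\P)$.

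Substituting $R_\Max^*=H^*$ into the dual representation above delivers the first claim of the theorem, while the identification itself is exactly \eqref{eq.convexmain}. The delicate point of the argument is the application of Fenchel-Moreau with an \emph{attained} maximum, which is the reason for working with the finitely additive $\ba_+^1$ rather than countably additive measures; the positivity of elements of $\ba_+^1$ is then the decisive ingredient that converts the order relation $X_0\geq X$ into the inequality $\E_\P(X)\leq \E_\P(X_0)$ needed for $R_\Max^*\leq H^*$.
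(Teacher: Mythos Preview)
Your proof is correct and follows the same overall strategy as the paper: establish convexity of $R_\Max$, invoke the F\"ollmer--Schied dual representation for convex monetary risk measures on $\Bb$, and then identify $R_\Max^*=H^*$. The convexity argument and the inequality $R_\Max^*\geq H^*$ match the paper essentially verbatim.

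The one notable difference is in the converse inequality $R_\Max^*\leq H^*$. The paper proceeds indirectly: it defines the auxiliary functional $R(X):=\sup_{\P\in\ba_+^1}\E_\P(X)-H^*(\P)$, observes that $R$ is a risk measure satisfying $R_\Max\leq R\leq H$ on $C$, and then invokes the \emph{maximality} of $R_\Max$ from Theorem~\ref{thm.main} to force $R_\Max=R$, whence $H^*\geq R_\Max^*$. Your argument is more direct: you use monotonicity of $\E_\P$ for $\P\in\ba_+^1$ to pass from $X$ to a dominating $X_0\in C$ and obtain $\E_\P(X)-R_\Max(X)\leq H^*(\P)$ in one line. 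Your route is shorter and avoids the auxiliary functional; the paper's route, by contrast, makes explicit use of the extremal characterization of $R_\Max$, which ties the duality result more visibly back to the decomposition theorem.
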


By the previous theorem, the convex dual $H^*$ of the premium principle corresponds to the penalty function of its maximal risk measure. $H^*$ thus represents the confidence that the insurer puts on a particular model $\P$ within the class of all possible models. In the sequel, we will refer to
 \[
  \PP:=\big\{\P\in \ba_+^1\, \big|\, H^*(\P)<\infty\big\}
 \]
 as the set of all \emph{plausible models}.\ A priori, the plausible models are only given in terms of finitely additive probability measures.\ However, under additional continuity assumptions on the premium principle $H$, one can ensure that all plausible models are in fact countably additive.
 
 \begin{corollary}\label{cor.contabove}
  Let $H\colon \Bb\to \R$ be a convex premium principle, and assume that $H$ is continuous from above, i.e. $\inf_{n\in \N}H(X_n)= 0$ for all sequences $(X_n)_{n\in \N}\subset \Bb$ with $X_n\leq X_{n+1}$ for all $n\in \N$ and $\inf_{n\in \N}X_n=0$. Then, $R_\Max$ is continuous from above. In particular, all elements of $\PP$ are (countably additive) probability measures.
 \end{corollary}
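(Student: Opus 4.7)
The plan is two-fold: first I would deduce continuity from above of $R_\Max$ directly from the dominance $R_\Max\leq H$ combined with the monotonicity of $R_\Max$ as a risk measure, and then use the dual inequality from Theorem \ref{thm.convex} to upgrade finite additivity of each $\P\in\PP$ to $\sigma$-additivity via an elementary scaling argument.

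For the first step, observe that in this corollary $C=\Bb$, so Theorem \ref{thm.main} gives $R_\Max(X)\leq H(X)$ for every $X\in\Bb$. For any sequence $(X_n)\subset\Bb$ decreasing pointwise to $0$, monotonicity of $R_\Max$ yields $R_\Max(X_n)\geq R_\Max(0)=0$, while $R_\Max(X_n)\leq H(X_n)$ together with $\inf_{n\in\N}H(X_n)=0$ gives $\inf_{n\in\N}R_\Max(X_n)\leq 0$. Hence $\inf_{n\in\N}R_\Max(X_n)=0$, which is precisely continuity from above of $R_\Max$.

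For the second step, fix $\P\in\PP$, i.e.\ $H^*(\P)<\infty$. To verify countable additivity I would take any decreasing sequence $(A_n)\subset\FF$ with $\bigcap_{n\in\N}A_n=\es$ and, for arbitrary $\lambda>0$, consider the test functions $X_n:=\lambda\ch_{A_n}$, which decrease pointwise to $0$. The first step then gives $R_\Max(X_n)\to 0$. Identity \eqref{eq.convexmain} provides the bound
\[
\lambda\,\P(A_n)=\E_\P(X_n)\leq R_\Max(X_n)+H^*(\P).
\]
Dividing by $\lambda$ and passing to $\limsup_{n\to\infty}$ yields $\limsup_{n\to\infty}\P(A_n)\leq H^*(\P)/\lambda$, and sending $\lambda\to\infty$ forces $\P(A_n)\to 0$. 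This is the standard criterion for $\sigma$-additivity of a finitely additive probability on a $\sigma$-algebra, so each $\P\in\PP$ is in fact countably additive.

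There is no serious obstacle; the one slightly subtle ingredient is recognizing that the free scaling parameter $\lambda$ is precisely what converts a qualitative decay estimate on $R_\Max$ into the sharper quantitative statement $\P(A_n)\to 0$. This is a routine trick from the duality theory of convex risk measures in the spirit of F\"ollmer and Schied \cite{MR3859905}, and no further regularity assumption on $H$ is required beyond continuity from above.
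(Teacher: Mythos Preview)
Your argument is correct. The first step is identical to the paper's proof: sandwich $R_\Max(X_n)$ between $0$ and $H(X_n)$ and use the assumed continuity of $H$. For the second step the paper simply invokes \eqref{eq.convexmain} together with \cite[Theorem 4.22]{MR3859905} to conclude countable additivity, whereas you unpack that citation by hand via the scaling trick $X_n=\lambda\ch_{A_n}$. So the route is essentially the same; your version is more self-contained, while the paper's is shorter by outsourcing the $\sigma$-additivity argument to F\"ollmer--Schied.
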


 \begin{proof}
 Note that, for every sequence $(X_n)_{n\in \N}\subset \Bb$ with $X_n\leq X_{n+1}$ for all $n\in \N$ and $\inf_{n\in \N}X_n=0$,
 \[
  0\leq R_\Max (X_n)\leq H(X_n)\to 0\quad \text{as }n\to \infty.
 \]
 Therefore, $R_\Max$ is continuous from above, and \eqref{eq.convexmain} together with \cite[Theorem 4.22]{MR3859905} implies that all elements of $\PP$ are countably additive.
\end{proof}

If a convex premium principle $H$ is scalable in the sense that it is positively homogeneous, then $R_\Max$ is a coherent risk measure.
\begin{corollary}\label{cor.dualrep}
 Let $H\colon C\to \R$ be a sublinear premium principle, i.e.~$H$ is a convex premium principle, and $H(\la X)=\la H(X)$ for all $X\in C$ and $\la>0$.~Then, the representing maximal risk measure $R_\Max$ is a coherent risk measure, i.e.
 \[
  R_\Max(X)=\max_{\P\in \PP} \E_\P (X)\quad \text{for all }X\in \Bb,
 \]
where the set of plausible models is given by
 \[
  \PP=\big\{\P\in \ba_+^1\, \big|\, \forall X\in C\colon \E_\P (X)\leq H(X)\big\}.
 \]
 \end{corollary}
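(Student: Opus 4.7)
The strategy is to plug sublinearity into Theorem \ref{thm.convex} and observe that positive homogeneity forces the convex dual $H^{*}$ to be $\{0,+\infty\}$-valued.

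First I would note that $H^{*}(\P)\ge \E_{\P}(0)-H(0)=0$ for every $\P\in\ba_{+}^{1}$, since $0\in C$ and $H(0)=0$. Next, using positive homogeneity of $H$, I would observe that for any $X\in C$ and any $\la>0$ we have $\la X\in C$ (since $C$ is a linear space) and
\[
\E_{\P}(\la X)-H(\la X)=\la\bigl(\E_{\P}(X)-H(X)\bigr),
\]
so that
\[
H^{*}(\P)=\sup_{X\in C}\sup_{\la>0}\la\bigl(\E_{\P}(X)-H(X)\bigr).
\]
This shows that $H^{*}(\P)\in\{0,+\infty\}$: it equals $0$ if $\E_{\P}(X)\le H(X)$ for every $X\in C$, and $+\infty$ otherwise. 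Consequently,
\[
\PP=\{\P\in\ba_{+}^{1}\mid H^{*}(\P)<\infty\}=\bigl\{\P\in\ba_{+}^{1}\,\big|\,\forall X\in C\colon \E_{\P}(X)\le H(X)\bigr\},
\]
which is exactly the set claimed in the statement.

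Plugging this into the representation from Theorem \ref{thm.convex} yields
\[
R_{\Max}(X)=\max_{\P\in\ba_{+}^{1}}\bigl(\E_{\P}(X)-H^{*}(\P)\bigr)=\max_{\P\in\PP}\E_{\P}(X)\qquad\text{for all }X\in\Bb,
\]
and the maximum is attained because Theorem \ref{thm.convex} asserts attainment in the general convex case. Since $R_{\Max}$ is already known to be a monetary risk measure by Theorem \ref{thm.main}, the representation as a supremum of linear functionals shows additionally that it is convex and positively homogeneous, hence coherent.

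No genuine obstacle is expected: the only subtle point is checking that the supremum over $C$ in the definition of $H^{*}$ genuinely inherits positive homogeneity (which requires $\la X\in C$ for $\la>0$, granted by linearity of $C$) and that the $\{0,+\infty\}$ dichotomy carries through to identify $\PP$ with $\{H^{*}=0\}$. Everything else is a direct specialization of Theorem \ref{thm.convex}.
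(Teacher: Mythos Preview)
Your proposal is correct and follows essentially the same route as the paper: the paper's proof simply invokes Theorem~\ref{thm.convex} together with the observation that sublinearity forces $H^{*}(\P)\in\{0,\infty\}$ for all $\P\in\ba_{+}^{1}$, which is exactly what you spell out in detail.
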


 \begin{proof}
  This follows directly from Theorem \ref{thm.convex} together with the observation that sublinearity implies $H^*(\P)\in \{0,\infty\}$ for all $\P\in \ba_+^1$.
 \end{proof}

Note that, for all probabilistic models $\P\in \PP$ and all claims $X\in C$,
\begin{equation}\label{eq.safetyload}
 H(X)\geq \E_\P( X )
\end{equation}
if and only if $H^*(\P)=0$. In particular, a sublinear premium principle $H$ incorporates a so-called \textit{safety loading} under each plausible model $\P\in \PP$, cf.~Castagnoli et al.~\cite{MR1932757}\erase{, Young \cite{doi:10.1002/9780470012505.tap027},} and Deprez and Gerber \cite{MR797503}. In the next step, we analyze in more detail the minimal deviation measure of the premium's decomposition. For $\P\in \PP$, we define
\begin{equation}\label{eq.decomplin0}
  D_\P(X):=H\big(X-\E_\P( X)\big),\quad \text{for }X\in C.
\end{equation}
By \eqref{eq.safetyload}, $D_\P$ defines a deviation measure if and only if $H^*(\P)=0$. In this case, we have 
\begin{equation}\label{eq.decomplin}
 H(X)=\E_\P (X)+D_\P(X)\quad \text{for all }X\in C,
\end{equation}
and the deviation measure $D_\P$ can be seen as the profit for accepting the \mnen{aleatoric} risk of $X$ under the model $\P$. Moreover, Equation \eqref{eq.decomplin} provides a model-dependent decomposition of the premium principle $H$ into a risk measure and a deviation measure. We have the following relation between the minimal deviation measure $D_\Min$ and the family $(D_\P)_{\P\in \PP}$.

\begin{corollary}\label{cor.dualdev}
 Let $H\colon C\to \R$ be a convex premium principle. Then,
 \[
  D_\Min(X)=\min_{\P\in \PP} D_\P(X) +H^*(\P) \quad \text{for all }X\in C.
 \]
\end{corollary}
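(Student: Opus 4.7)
The plan is to reduce everything to an immediate consequence of Theorem \ref{thm.convex} by rewriting $D_\P$ in a more transparent form. The only real input needed is Axiom (P1) applied to the constant shift $\E_\P(X) \in \R$.

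First I would observe that for $X \in C$ and $\P \in \PP$, since $\E_\P(X)$ is a real number, Axiom (P1) gives
\[
D_\P(X) = H\bigl(X - \E_\P(X)\bigr) = H(X) - \E_\P(X).
\]
Adding $H^*(\P)$ to both sides and minimising over $\P \in \PP$ yields
\[
\min_{\P \in \PP} \bigl( D_\P(X) + H^*(\P) \bigr) = H(X) - \max_{\P \in \PP} \bigl( \E_\P(X) - H^*(\P) \bigr).
\]
Second, I would invoke Theorem \ref{thm.convex}: the maximum on the right equals $R_\Max(X)$, with the max attained (and, since $H^*(\P) = \infty$ for $\P \notin \PP$, the effective supremum is indeed over $\PP$, so the minimum on the left is attained as well). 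Therefore
\[
\min_{\P \in \PP} \bigl( D_\P(X) + H^*(\P) \bigr) = H(X) - R_\Max(X) = D_\Min(X),
\]
which is the claim.

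There is essentially no obstacle here beyond correctly invoking cash-additivity and Theorem \ref{thm.convex}. The only subtlety worth noting for the reader is that $D_\P$ as defined in \eqref{eq.decomplin0} is a \emph{bona fide} deviation measure only when $H^*(\P) = 0$ (cf.\ the discussion preceding the corollary), but the formula $D_\P(X) + H^*(\P) = H(X) - \E_\P(X) + H^*(\P)$ makes sense for every $\P \in \PP$ regardless, and this is all that is required for the minimisation.
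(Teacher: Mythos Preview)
Your proof is correct and follows essentially the same approach as the paper: both arguments combine cash-additivity (P1) with Theorem~\ref{thm.convex} to rewrite $D_\Min(X)=H(X)-R_\Max(X)$ as the desired minimum, differing only in the order of the steps. Your additional remark about $D_\P$ not always being a genuine deviation measure is a helpful clarification that the paper leaves implicit.
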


\begin{proof}
By Theorem \ref{thm.convex},
\[
 D_\Min(X)=H(X)-R_\Max(X)=\min_{\P\in \PP} H\big(X-\E_\P(X)\big)+H^*(\P)\quad \text{for all }X\in C.
\]
 The statement now follows from Equation \eqref{eq.decomplin0}.
\end{proof}

\begin{example}\label{ex.subldevmeas}
 Let $\P$ be a probability measure, $\theta> 0$, and consider
 \[
  H(X):=\E_\P(X)+\frac{\theta}{2}\var_\P(X)\quad\text{for all }X\in \Bb.  
 \]
 Let $\Q\in \PP$. Then, Corollary \ref{cor.contabove} implies that $\Q$ is countably additive and absolutely continuous w.r.t.~$\P$, where the latter follows from the inequality $\E_\Q(X)\leq H(X)$ for all $X\in \Bb$. Let $Z:=\frac{{\rm d}\Q}{{\rm d}\P}$. We can write  \begin{align*}
  H^*(\Q)&=\inf_{X\in \Bb} \E_\Q(X)+H(X)=\inf_{X\in \Bb} \E_\P\left[X\bigg(1-Z+\frac{\theta}{2}\big(X-\E_\P(X)\big)\bigg)\right] 
 \end{align*}
With the help of Cauchy-Schwarz inequality, one can then show that 
$$   H^*(\Q)=  \inf\bigg\{\al\E_\P\left(X^2\right) \, \bigg|\, \al\leq 0, \; \al X=1-Z+\frac{\theta}{2}\big(X-\E_\P(X)\big)\bigg\}, $$
compare the Appendix of Maccheroni et al.\ \cite{MR2536871} for more details.
 Note that the equality $\al X=1-Z+\frac{\theta}{2}\big(X-\E_\P(X)\big)$ implies that $\E_\P(X)=0$, which, in turn, implies that $X=\big(\al-\frac\theta 2\big)^{-1}(1-Z)$. Hence,
 \[
  H^*(\Q)=\inf_{\al\leq 0}\al\bigg(\al-\frac\theta 2\bigg)^{-2}\var_\P\bigg(\frac{{\rm d}\Q}{{\rm d}\P}\bigg).
 \]
 Since $\frac{{\rm d}}{{\rm d}\alpha}\al\big(\al-\frac\theta 2\big)^{-2}=0$ if and only if $\al=-\frac\theta 2$, we obtain that
 \[
  H^*(\Q)=-\frac{1}{2\theta} \var_\P\bigg(\frac{{\rm d}\Q}{{\rm d}\P}\bigg)
 \]
 is (up to the factor $-\frac{1}{2\theta}$) the \textit{Gini concentration index}. By Theorem \ref{thm.convex},
 \[
  R_\Max(X)=\max_{\Q\in \PP} \E_\Q(X)- \frac{1}{2\theta}\var_\P\bigg(\frac{{\rm d}\Q}{{\rm d}\P}\bigg)\quad \text{for all }X\in \Bb.
 \]
 \end{example}
 
 \begin{example}\label{ex.subldevmeas1}
  Let $\P$ be a probability measure, $\theta\geq 0$, and consider
 \[
  H(X)=\E_\P(X)+ \th \E_\P\left(\left|X-\P_X^{-1}\big(\tfrac12\big)\right|\right),\quad \text{for }X\in \Bb.
 \]
 Then, by \cite[Lemma 4.46]{MR3859905},
 \begin{align*}
  \E_\P\left(\left|X-\P_X^{-1}\big(\tfrac12\big)\right|\right)&=\E_\P\left(\left(X-\P_X^{-1}\big(\tfrac12\big)\right)^-\right)+\E_\P\left(\left(X-\P_X^{-1}\big(\tfrac12\big)\right)^+\right)\\
  &=\frac{1}{2}\big(\AVaR_\P^{\frac{1}{2}}(-X)+\AVaR_\P^{\frac{1}{2}}(X)\big)
 \end{align*}
 Recall that, for $\ep\in (0,1)$,
 \[
  \AVaR_\P^\ep(X)=\max_{\Q\in \QQ_{1/\ep}}\E_\Q(-X)\quad \text{for all }X\in \Bb,
 \]
 where, for $\al\geq 1$, $\QQ_\al$ denotes the set of all probability measures $\Q\ll \P$ whose density is $\P$-a.s.~bounded by $\al$, cf. \cite[Theorem 4.47]{MR3859905}. Therefore, the set $\PP$ related to $\R_\Max$ is given by the set of all probability measures $\Q^*$ of the form
 \begin{equation}\label{eq.ex3.5}
  \Q^*=\P+\frac{\th}{2} \big(\Q-\Q'\big)
 \end{equation}
 with $\Q,\Q'\in \QQ_2$. We show that $\PP$ consists of all probability measures $\Q^*\ll\P$ with
 \begin{equation}\label{eq.ex3.5-1}
  1-\theta\leq \frac{{\rm d}\Q^*}{{\rm d}\P}\leq 1+\theta\quad  \P\text{-a.s.} 
 \end{equation}
  In particular, for $\theta\geq 1$, $\PP=\QQ_{1+\theta}$, which implies that $$R_\Max(X)=\AVaR_\P^{\frac{1}{1+\th}}(-X)\quad \text{for all }X\in \Bb.$$
 In fact, by the previous argumentation, it follows that every $\Q^*\in \PP$ is of the form \eqref{eq.ex3.5}, which, in turn, implies that it satisfies \eqref{eq.ex3.5-1}. Now, assume that $\Q^*\ll\P$ is a probability measure, which satisfies \eqref{eq.ex3.5-1}. For $\theta=0$, it follows that $\Q^*=\P\in \PP$. Hence, assume that $\theta>0$, and define
 \[
  Z:=\frac{2}{\theta}\left(\frac{{\rm d}\Q^*}{{\rm d}\P}-1\right).
 \]
 Then, $|Z|\leq 2$ $\P$-a.s., $\E_\P(Z)=0$, and, by H\"older's inequality, $\frac{\E_\P(|Z|)}{2}\leq 1$. Define
 \[
  Y:=Z^+ +1-\frac{|Z|}{2}\quad \text{and}\quad Y':=Z^-+ 1-\frac{|Z|}{2}.
 \]
 Then, $0\leq Y\leq 2$ and $0\leq Y'\leq 2$ $\P$-a.s. Moreover, $Y-Y'=Z$ $\P$-a.s. and
 \[
  \E_\P(Y)=\E_\P(Z^+)+1-\frac{\E_\P(|Z|)}{2}=1=\E_\P(Z^-)+1-\frac{\E_\P(|Z|)}{2}=\E_\P(Y').
 \]
 Hence, Equation \eqref{eq.ex3.5} is satisfied with $\Q:=Y{\rm d}\P$ and $\Q':=Y'{\rm d}\P$.
 \end{example}

We say that a premium principle $H\colon C\to \R$ is \textit{monotone} if $H(X)\leq H(Y)$ for all claims $X,Y\in C$ with $X\leq Y$.\ Throughout the remainder of this section, we discuss the relation to monotone sublinear premium principles that Castagnoli et al.~consider in \cite{MR1932757}. More precisely, we show that, in the convex case, replacing Axiom (P2) in the definition of a premium principle by a so-called \emph{internality condition}, cf. \cite{MR1932757}, implies the monotonicity of the premium principle, and thus, together with positive homogeneity, leads to the objects considered in \cite{MR1932757}.~A similar result can be found in Deprez and Gerber \cite[Theorem 3]{MR797503}.

\begin{proposition}\label{lem.sublinearmon}
 Let $H\colon C\to \R$ a convex map that satisfies (P1). Then, the following statements are equivalent: 
  \begin{enumerate}
 \item[(i)] $H$ is a monotone premium principle,
 \item[(ii)] $H(0)=0$ and $H(X)\leq 0$ for all $X\in C$ with $X\leq 0$, i.e. $H$ is internal.
 \end{enumerate}
\end{proposition}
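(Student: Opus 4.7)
The forward direction (i) $\Rightarrow$ (ii) is essentially immediate: every premium principle has $H(0)=0$ by Axiom (P2), and monotonicity then forces $H(X)\leq H(0)=0$ for every $X\in C$ with $X\leq 0$.

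For the converse (ii) $\Rightarrow$ (i), I would split the argument in two. First, I would verify that internality together with convexity forces (P2). For $X\in C$ with $X\geq 0$, the element $-X\in C$ is nonpositive, so (ii) gives $H(-X)\leq 0$; writing $0=\tfrac{1}{2}X+\tfrac{1}{2}(-X)$ and invoking convexity together with $H(0)=0$ yields $0\leq \tfrac{1}{2}H(X)+\tfrac{1}{2}H(-X)$, hence $H(X)\geq -H(-X)\geq 0$. Together with $H(0)=0$ this is Axiom (P2), so $H$ is a premium principle and it remains to establish monotonicity.

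For monotonicity, let $X,Y\in C$ with $X\leq Y$, put $Z:=X-Y\leq 0$, and consider the one-variable slice $f(\la):=H(Y+\la Z)$ on $[0,\infty)$. Since $C$ is a linear space, the argument stays in $C$, and $f$ is convex in $\la$ as the composition of the convex $H$ with an affine map. For each $\la\geq 0$, the claim $Y+\la Z$ is pointwise bounded above by $Y$, so its supremum $c_\la:=\sup(Y+\la Z)$ satisfies $c_\la\leq \sup Y$; applying (P1) and internality to the nonpositive claim $(Y+\la Z)-c_\la$ gives
\[
f(\la) \;=\; H\bigl((Y+\la Z)-c_\la\bigr) + c_\la \;\leq\; c_\la \;\leq\; \sup Y.
\]
Thus $f$ is convex on $[0,\infty)$ and bounded above by the finite constant $\sup Y$.

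The concluding step is the standard fact that a convex function on a half-line which is bounded above is automatically nonincreasing: otherwise $f(a)<f(b)$ for some $0\leq a<b$, and the secant inequality at $t>b$ would force $f(t)\geq f(a)+\tfrac{t-a}{b-a}\bigl(f(b)-f(a)\bigr)\to \infty$ as $t\to \infty$, contradicting the uniform upper bound. Applied to our $f$, this yields $f(1)\leq f(0)$, i.e.\ $H(X)\leq H(Y)$. The main obstacle I foresee lies precisely in the interplay between (P1), convexity and (ii) needed to produce a global upper bound for $f$ on the entire ray $[0,\infty)$; internality only controls $H$ on the negative cone, and it is the (P1)-translation by $c_\la$ that converts the pointwise bound $Y+\la Z\leq Y$ into a uniform numerical bound on $f$. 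Once that is in place, the convex-on-a-half-line argument closes the proof.
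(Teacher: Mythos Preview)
Your proof is correct and follows essentially the same route as the paper: both derive the no-ripoff bound $H(W)\leq\sup W$ from (P1) and internality, and then exploit convexity along the ray $\la\mapsto Y+\la(X-Y)$, sending a parameter to a limit to conclude $H(X)\leq H(Y)$. The paper writes the convexity step as $H(X)\leq \la H(Y)+(1-\la)H\bigl(\tfrac{X-\la Y}{1-\la}\bigr)$ and lets $\la\uparrow 1$, which, after the substitution $\mu=\tfrac{1}{1-\la}$, is exactly your secant inequality for $f$ at the points $0,1,\mu$; your packaging of this as ``a convex function on $[0,\infty)$ bounded above is nonincreasing'' is a clean and slightly more conceptual way to phrase the same argument.
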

Note that (ii) together with (P1) implies the standard no-ripoff condition \eqref{eq.no-ripoff}.

The following proposition is a partial extension of Theorem 3 in Castagnoli et al.~\cite{MR1932757}, where statement (i) is a reformulation of Axiom P.7 in \cite{MR1932757}.
\begin{proposition}\label{prop.monamb}
 Let $H\colon C\to \R$ be a sublinear premium principle, and define
 \[
  \Amb_\PP(X):=\frac{1}{2}\big(R_\Max(X)+R_\Max(-X)\big)=\frac{1}{2}\max_{\Q,\Q'\in \PP} \E_\Q(X)-\E_{\Q'} (X),
 \]
 for $X\in \Bb$. Then, for every $\P\in \PP$, the following statements are equivalent:
  \begin{enumerate}
  \item[(i)] $\E_\P( X)=\frac{1}{2}\big(R_\Max(X)-R_\Max(-X)\big)$ for all $X\in \Bb$,
  \item[(ii)] $\PP$ is symmetric with center $\P$, i.e. $2\P-\Q\in \PP$ for all $\Q\in \PP$,
  \item[(iii)] $\Amb_\PP(X)=\max_{\Q\in \PP}\E_\Q (X)-\E_\P(X)$ for all $X\in \Bb$,
   \item[(iv)] $D_\P(X)=D_\Min(X)+\Amb_\PP(X)$ for all $X\in C$.
 \end{enumerate}
  In this case, $R_\Max$ is dominated by $\P$, i.e.~every $\Q\in \PP$ is absolutely continuous w.r.t.~$\P$, and $\P$ is countably additive if and only if every $\Q\in \PP$ is countably additive.
\end{proposition}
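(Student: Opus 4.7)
The plan is to establish the cycle (ii) $\Rightarrow$ (i) $\Leftrightarrow$ (iii) $\Leftrightarrow$ (iv) $\Rightarrow$ (ii), and then to read off the dominance and countable-additivity addenda. The backbone of the argument is the dual representation $R_\Max(X) = \max_{\Q \in \PP}\E_\Q(X)$ from Corollary \ref{cor.dualrep}, which also yields $R_\Max(-X) = -\min_{\Q \in \PP}\E_\Q(X)$ and the second expression for $\Amb_\PP$ displayed in the proposition.

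The equivalence (i) $\Leftrightarrow$ (iii) is a direct algebraic rearrangement: substituting $\Amb_\PP(X) = \tfrac{1}{2}\bigl(R_\Max(X) + R_\Max(-X)\bigr)$ into (iii) and solving for $\E_\P(X)$ reproduces (i). For (iii) $\Leftrightarrow$ (iv), the assumption $\P \in \PP$ together with sublinearity of $H$ forces $H^*(\P) = 0$, so (P1) gives $D_\P(X) = H(X) - \E_\P(X)$ on $C$; combined with $D_\Min(X) = H(X) - R_\Max(X)$, condition (iv) reduces to $R_\Max(X) - \E_\P(X) = \Amb_\PP(X)$ on $C$, i.e.\ (iii) restricted to $C$. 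The passage from $C$ back to $\Bb$ is controlled by Corollary \ref{cor.dualrep}, which ties both $R_\Max$ and $\Amb_\PP$ on $\Bb$ to the set $\PP$ determined by $H$ on $C$.

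For (ii) $\Rightarrow$ (i), the involution $\Q \mapsto 2\P - \Q$ maps $\PP$ onto itself and turns maxima into minima: $\min_{\Q \in \PP}\E_\Q(X) = 2\E_\P(X) - \max_{\Q \in \PP}\E_\Q(X) = 2\E_\P(X) - R_\Max(X)$, which rearranges to (i). For (i) $\Rightarrow$ (ii) (and the parallel (iv) $\Rightarrow$ (ii) via the $C$-version of the same identity), I fix $\Q \in \PP$ and verify both defining properties of $2\P - \Q \in \PP$. The dual inequality is
\[
\E_{2\P - \Q}(X) = 2\E_\P(X) - \E_\Q(X) \leq \bigl(R_\Max(X) - R_\Max(-X)\bigr) + R_\Max(-X) = R_\Max(X) \leq H(X),
\]
for $X \in C$, using (i) and $-\E_\Q(X) \leq R_\Max(-X)$. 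For the positivity $2\P - \Q \in \ba_+^1$, I evaluate (i) at $X = \mathbbm{1}_A$: monotonicity of $R_\Max$ on $\Bb$ gives $R_\Max(-\mathbbm{1}_A) \leq R_\Max(0) = 0$, whence $2\P(A) = R_\Max(\mathbbm{1}_A) - R_\Max(-\mathbbm{1}_A) \geq R_\Max(\mathbbm{1}_A) \geq \Q(A)$.

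The addenda follow quickly. If $\P(A) = 0$, then (i) at $X = \mathbbm{1}_A$ together with $R_\Max(\mathbbm{1}_A) \geq 0$ and $R_\Max(-\mathbbm{1}_A) \leq 0$ forces $R_\Max(\mathbbm{1}_A) = 0$, so $\Q(A) \leq R_\Max(\mathbbm{1}_A) = 0$ for every $\Q \in \PP$, which is the dominance claim. For countable additivity, one direction is trivial since $\P \in \PP$; for the other, if $\P$ is countably additive and $A_n \downarrow \varnothing$, then (ii) yields $0 \leq \Q(A_n) \leq 2\P(A_n) \to 0$ for every $\Q \in \PP$. The main obstacle is the passage from $C$ to $\Bb$ in the step (iv) $\Rightarrow$ (ii): the positivity of $2\P - \Q$ requires applying (i) to indicators $\mathbbm{1}_A$, which need not lie in $C$, so one must first extend the identity $R_\Max(X) - R_\Max(-X) = 2\E_\P(X)$ from $C$ to $\Bb$, which I would do by exploiting that both sides are sublinear functionals on $\Bb$ uniquely determined by $\PP$ via the dual representation.
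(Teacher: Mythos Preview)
Your treatment of (i) $\Leftrightarrow$ (ii) $\Leftrightarrow$ (iii) and of the dominance and countable-additivity addenda follows the paper's proof almost verbatim: the paper also gets (i) $\Leftrightarrow$ (iii) by rearrangement, (ii) $\Rightarrow$ (i) via the involution $\Q\mapsto 2\P-\Q$ together with a symmetry argument, and (i) $\Rightarrow$ (ii) via the inequality $2\E_\P(X)-\E_\Q(X)\le R_\Max(X)$ on $\Bb$. Your explicit check that $2\P-\Q\in\ba_+^1$ by evaluating at indicators is a welcome addition; the paper leaves positivity implicit in that inequality.

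The genuine problem---which you correctly isolate---is the implication (iv) $\Rightarrow$ (i)/(ii)/(iii). As you note, (iv) is only equivalent to the restriction of (i) (equivalently (iii)) to $C$, and one then needs to extend the identity $2\E_\P(X)=R_\Max(X)-R_\Max(-X)$ from $C$ to all of $\Bb$. Your proposed fix does not work: the map $X\mapsto R_\Max(X)-R_\Max(-X)$ is a \emph{difference} of two sublinear functionals and is therefore not itself sublinear, so there is no dual-uniqueness principle to invoke. Worse, the extension can fail outright. Take $\Omega=\{1,2,3\}$, $C=\{(a,b,b):a,b\in\R\}$, and $H(a,b,b)=\max(a,b)$; then $\PP$ is the full probability simplex, $R_\Max(X)=\max_i X_i$ on $\Bb$, and $\P=(\tfrac12,\tfrac12,0)\in\PP$ satisfies (iv) (one checks $D_\Min\equiv 0$ on $C$ and $D_\P(a,b,b)=\tfrac12|a-b|=\Amb_\PP(a,b,b)$), yet (ii) fails because $2\P-\delta_3=(1,1,-1)\notin\ba_+^1$. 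The paper's proof does not address this passage either: it simply declares (i) $\Leftrightarrow$ (iv) to be ``an immediate consequence'' of the displayed identity, which is an identity on $C$ only. When $C=\Bb$, as in all the explicit examples of Section~\ref{sec.dualrep}, the difficulty disappears and both your argument and the paper's are complete.
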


\begin{remark}\label{rem.cast}
  Let us discuss  the implications of Proposition \ref{prop.monamb} in relation to Castagnoli et al.\ \cite{MR1932757} who  consider sublinear premium principles $H\colon \Bb\to \R$ of the form
  \begin{equation}\label{eq.ambcast}
   H(X):=\E_\P(X)+\Amb_\PP(X) 
  \end{equation}
  with the additional requirement that $H$ is internal. Proposition \ref{lem.sublinearmon}  implies that $H$ is monotone, leading  to the equation
  \[
   \Amb_\PP(X)=R_\Max(X)-\E_\P(X)=\max_{\Q\in \PP} \E_\Q (X)-\E_\P(X).
  \]
  Therefore,   in the setup chosen in \cite{MR1932757}, all elements of $\PP$ are absolutely continuous w.r.t.~the baseline model (center) $\P$; in other words, undominated sets of probability measures are implicitly excluded. However,  undominated sets of plausible models appear quite naturally, for example, when considering a Brownian Motion with Knightian uncertainty  about  the volatility parameter, see e.g.~Peng~\cite{MR2397805},\cite{MR2474349} or Soner et al.~\cite{MR2746175},\cite{MR2842089}. Hence, replacing the internality axiom P.1 in \cite{MR1932757} by the apparently similar assumption (P2) in Definition \ref{def.premium} has a huge impact, allowing also for premium principles of the form \eqref{eq.ambcast} with nonsymmetric and thus undominated sets $\PP$.  \end{remark}

In the following example, we describe a basic setup that leads to a nonsymmetric set of priors $\PP$.

\begin{example}
 Consider the setup \eqref{eq.ambcast} from \cite{MR1932757} with $\Om=\N$, endowed with the $\si$-algebra $\FF=2^\N$ (power set). For $n\in \N$, we consider the measure
 \[
  \P_n:=\frac{1}{n}\sum_{k=1}^n\delta_k,
 \]
 where $\delta_k$ denotes the Dirac measure with barycenter $k\in \N$. We then consider the monotone premium principle
 \[
  H(X):=\sup_{n\in \N} \E_{\P_n}(X)=\sup_{n\in \N}\frac{1}{n}\sum_{k=1}^n X(k),\quad \text{for }X\in \Bb.
 \]
 One readily verifies that the set $\PP$ consists only of probability measures $\P$ of the form
 \begin{equation}\label{eq.excastagnoli}
  \P=\sum_{n\in \N}\la_n\delta_n
 \end{equation}
 with a nonincreasing sequence $(\la_n)_{n\in \N}\subset [0,1]$ summing up to $1$. Assume that there existed some $\P\in \PP$ with $2\P-\P_n\in \PP$ for all $n\in \N$. Then, $\P$ is of the form \eqref{eq.excastagnoli} with a nonincreasing sequence $(\la_n)_{n\in \N}\subset [0,1]$. On the other hand, $2\P-\P_n\in \PP$ for all $n\in \N$, which, in particular, means that
 \begin{equation}\label{eq.excastagnoli1}
  2\lambda_n-\frac1n\geq 2\la_{n+1}\quad \text{for all }n\in \N.
 \end{equation}
 However, Equation \eqref{eq.excastagnoli1} implies that
 \[
  \la_{n+1}=\la_1+\sum_{k=1}^n\big(\la_{k+1}-\la_k\big)\leq \la_1-\sum_{k=1}^n\frac1{2k}\to -\infty,\quad \text{as }n\to \infty,
 \]
 leading to a contradiction.\ By means of Proposition \ref{prop.monamb}, we may conclude that there exists no $\P\in \PP$ with
 \[
  H(X)= \E_\P (X)+ \Amb_\PP(X)\quad \text{for all }X\in \Bb.
 \]
 \mnen{That is, the right-hand side of the previous equation does not define a premium principle in the sense of Castagnoli et al.~\cite{MR1932757}, whereas it defines a premium principle in the sense of Definition \ref{def.premium}.}
\end{example}

\section{Additional properties and extensions of the model}\label{sec.4}

\subsection{Law-invariance of premium principles}

In this section, we focus on the special case, where the premium principle $H$ is law-invariant with respect to a fixed reference probability measure $\P$. 

For $X\in \Bb$,  let $\P_X(z):=\P(X\leq z)$ denote the distribution function of $X$ depending on $z \in \mathbb R$. We say that a premium principle $H\colon \Bb\to \R$ is \textit{law-invariant} if $$H(X)=H(Y)\quad \text{ for all }X,Y\in \Bb\text{ with }\P_X=\P_Y.$$ Law invariance implies  that $H$ is \textit{dominated} by $\P$, i.e. $H(X)=H(Y)$ for all $X,Y\in \Bb$ with $X=Y$ $\P$-a.s. If $H$ is dominated by $\P$, then $R_\Max$ is dominated by $\P$ as the following lemma shows.

\begin{proposition}\label{prop.dominance}
  Let $\P$ be a probability measure on $(\Om,\FF)$ and $H\colon \Bb\to \R$ be a premium principle that is dominated by $\P$. Then, $R_\Max$ is dominated by $\P$.
\end{proposition}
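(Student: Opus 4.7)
The plan is to use the explicit formula $R_\Max(X) = \inf\{H(X_0) : X_0 \in \Bb,\, X_0 \geq X\}$ from Theorem \ref{thm.main} and exploit the fact that, for this proposition, $C = \Bb$, so the space of ``test functions'' $X_0$ is closed under modifications on measurable sets. Fix $X,Y \in \Bb$ with $X = Y$ $\P$-a.s., and set $N := \{X \neq Y\} \in \FF$, which satisfies $\P(N) = 0$. By symmetry it suffices to show $R_\Max(Y) \leq R_\Max(X)$.

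Given any $X_0 \in \Bb$ with $X_0 \geq X$ pointwise, I will construct a competitor $X_0' \in \Bb$ that dominates $Y$ pointwise and agrees with $X_0$ $\P$-almost surely. Concretely, choose $M \geq \max(\sup|X_0|,\sup|X|,\sup|Y|)$ and set
\[
 X_0' := X_0 \cdot \ch_{N^c} + M \cdot \ch_N.
\]
Then $X_0' \in \Bb$; on $N^c$ we have $X_0' = X_0 \geq X = Y$, and on $N$ we have $X_0' = M \geq \sup Y \geq Y$ (and analogously $X_0' \geq X$), so $X_0' \geq Y$ everywhere. Moreover $X_0' = X_0$ on $N^c$, hence $\P$-a.s.

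Since $H$ is dominated by $\P$, it follows that $H(X_0') = H(X_0)$. Using $X_0' \in \Bb$ and $X_0' \geq Y$ in the defining infimum of $R_\Max(Y)$, we obtain $R_\Max(Y) \leq H(X_0') = H(X_0)$. Taking the infimum over all admissible $X_0$ yields $R_\Max(Y) \leq R_\Max(X)$, and the reverse inequality follows by interchanging the roles of $X$ and $Y$.

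The only subtlety I anticipate is making sure the constructed $X_0'$ is genuinely bounded and measurable — which is guaranteed because $N \in \FF$ and $M < \infty$ — and that the proposition's hypothesis $H\colon \Bb \to \R$ (i.e.\ $C = \Bb$) is used essentially, so that $X_0'$ is automatically an admissible test function. No convexity, continuity, or monotonicity beyond that already packaged in Theorem \ref{thm.main} is needed.
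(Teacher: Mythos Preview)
Your proof is correct and follows essentially the same approach as the paper: both modify a test function on the null set $N=\{X\neq Y\}$ to turn a competitor for one variable into a competitor for the other, then invoke $\P$-dominance of $H$. The only cosmetic difference is that the paper replaces the values on $N$ by $X$ itself (setting $X_0:=Y_0 1_{\{X=Y\}}+X 1_{\{X\neq Y\}}$) rather than by a large constant $M$, but the argument is otherwise identical.
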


\begin{remark}
Let $H\colon \Bb\to \R$ be a convex premium principle, which is dominated by a probability measure $\P$ on $(\Om,\FF)$. Then, the previous lemma together with Theorem \ref{thm.convex} and \cite[Lemma 4.32]{MR3859905} implies that, for $\Q\in \ba_+^1$, $H^*(\Q)<\infty$ implies that $\Q\ll \P$, i.e. $\Q(N)=0$ for all $N\in \FF$ with $\P(N)=0$. Let $\ba_+^1(\P)$ denote the set of all $\Q\in \ba_+^1$ with $\Q\ll \P$. Then, the set $\PP$ of all plausible models is a subset of $\ba_+^1(\P)$ and
\[
 R_\Max(X)=\max_{\Q\in \ba_+^1(\P)}\E_\Q(X)-H^*(\Q)\quad \text{for all }X\in \Bb.
\]
\end{remark}

We now establish the even stronger property   of law-invariance for $R_\Max$ on sufficiently rich probability spaces. 

\begin{proposition}\label{prop.lawinv}
Let $\P$ be a probability measure on $(\Om,\FF)$ and assume that $(\Om,\FF,\P)$ is an atomless. Let  $H\colon \Bb\to \R$ be a law-invariant premium principle.   Then, $R_\Max$ is law-invariant.
\end{proposition}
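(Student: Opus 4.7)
My plan is to establish the reverse-inequality $R_\Max(X) \leq R_\Max(Y)$ whenever $\P_X = \P_Y$; the other direction is symmetric, so this yields the claim. Throughout, I will exploit Proposition \ref{prop.dominance}, which ensures that $R_\Max$ is dominated by $\P$, so that pointwise and $\P$-a.s.\ inequalities $X_0\ge X$ can be used interchangeably inside the infimum defining $R_\Max(X)$: indeed, for any $X_0\in \Bb$ with $X_0\ge X$ only $\P$-a.s., the function $\tilde X_0:=X_0\vee X$ lies in $\Bb$, satisfies $\tilde X_0\ge X$ pointwise, agrees with $X_0$ $\P$-a.s., and hence gives $H(\tilde X_0)=H(X_0)$ by dominance.

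Fix $\ep>0$ and pick $Y_0\in \Bb$ with $Y_0\ge Y$ and $H(Y_0)\le R_\Max(Y)+\ep$. The heart of the argument is a coupling step: because $(\Om,\FF,\P)$ is atomless, for any real-valued $X\in \Bb$ and any bivariate law $\mu$ on $\R^2$ whose first marginal is $\P_X$, there exists an $X_0\in \Bb$ on the same space such that $(X,X_0)\sim\mu$ under $\P$. Applying this with $\mu:=\P_{(Y,Y_0)}$ (whose first marginal is $\P_Y=\P_X$) produces $X_0\in \Bb$ with $(X,X_0)\stackrel{d}{=}(Y,Y_0)$. In particular $\P_{X_0}=\P_{Y_0}$, so law-invariance of $H$ gives $H(X_0)=H(Y_0)$, and the joint law is supported in $\{(a,b):b\ge a\}$, hence $X_0\ge X$ $\P$-a.s.

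Replacing $X_0$ by $\tilde X_0:=X_0\vee X$ as above secures $\tilde X_0\in \Bb$, $\tilde X_0\ge X$ pointwise, and $H(\tilde X_0)=H(X_0)=H(Y_0)$. Therefore
\[
 R_\Max(X)\;\le\;H(\tilde X_0)\;=\;H(Y_0)\;\le\;R_\Max(Y)+\ep,
\]
and letting $\ep\downarrow 0$ completes the proof.

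The main obstacle is the coupling step, which is precisely where atomlessness is used. A concrete realization goes through a regular conditional distribution $K(y,\cdot)$ of $Y_0$ given $Y=y$ (which exists since $\R$ is Polish) and an auxiliary $U\sim\text{Unif}[0,1]$ on $(\Om,\FF,\P)$ independent of $X$; one then sets $X_0:=F^{-1}_{K(X,\,\cdot\,)}(U)$. Obtaining such an independent uniform is the genuinely nontrivial point and is standard for atomless probability spaces in the law-invariant risk-measure literature (e.g.\ in the style of F\"ollmer–Schied \cite{MR3859905}). Once this coupling is available, the argument is just bookkeeping with law-invariance and $\P$-dominance.
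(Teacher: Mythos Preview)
Your overall strategy---produce $X_0\in\Bb$ with $X_0\ge X$ $\P$-a.s.\ and $\P_{X_0}=\P_{Y_0}$, then invoke law-invariance of $H$ together with Proposition~\ref{prop.dominance}---is exactly right and is what the paper does. The gap is the coupling step. The assertion that on an arbitrary atomless $(\Om,\FF,\P)$ one can realise a prescribed bivariate law $\mu$ with first marginal $\P_X$ as $(X,X_0)$ is \emph{false}, and your proposed route via a uniform $U$ independent of $X$ need not be available. Take $\Om=[0,1]$ with Lebesgue measure and $X(\om)=\om$. Then $\si(X)=\FF$, so every $\FF$-measurable $U$ is a Borel function of $X$; a non-constant such $U$ cannot be independent of $X$. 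Hence if, say, $Y_0=Y+Z$ with $Z$ bounded, nonnegative, non-degenerate and independent of $Y\sim\P_X$, no $X_0$ on this space can satisfy $(X,X_0)\stackrel{d}{=}(Y,Y_0)$, because any candidate $X_0$ is a function of $X$ while $Y_0$ is not $\P$-a.s.\ a function of $Y$. The appeal to ``standard'' transfer theorems does not help here: those require the space to support extra independent randomness, which atomlessness alone does not guarantee.

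The paper avoids this by not trying to match the joint law at all; only the two marginal conditions $X_0\ge X$ $\P$-a.s.\ and $\P_{X_0}=\P_{Y_0}$ are needed, and for these a \emph{comonotone} coupling suffices. By \cite[Lemma~A.32]{MR3859905}, atomlessness yields a uniform $U$ with $X=\P_X^{-1}(U)$ $\P$-a.s.\ (no independence from $X$ is asked for---$U$ is built \emph{from} $X$, using atomlessness only to spread out the atoms of $\P_X$). Setting $X_0:=\P_{Y_0}^{-1}(U)$ gives $\P_{X_0}=\P_{Y_0}$, and $Y_0\ge Y$ forces $\P_{Y_0}^{-1}\ge\P_Y^{-1}=\P_X^{-1}$, whence $X_0\ge X$ $\P$-a.s. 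From there your bookkeeping (including the $\tilde X_0:=X_0\vee X$ adjustment) goes through unchanged.
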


Law-invariant convex risk measures on atomless probability spaces  that are continuous from above  enjoy a very particular structure, see e.g.\ \cite[Theorem 4.62]{MR3859905}. In particular, they  always admit a safety loading w.r.t.\ the reference probability measure $\P$. This leads to the following result.

\begin{corollary}
    Let  $(\Om,\FF,\P)$ be an atomless probability space and $H\colon \Bb\to \R$ be a convex and law-invariant premium principle that  is continuous from above.\ Then, $H(X)\geq \E_\P(X)$ for all $X\in \Bb$, 
\end{corollary}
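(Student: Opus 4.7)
The plan is to reduce the claim to the corresponding safety-loading property of the maximal risk measure $R_\Max$ and then apply the cited classical result on law-invariant convex risk measures over atomless probability spaces.

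First, by Theorem \ref{thm.main}, we can write $H(X)=R_\Max(X)+D_\Min(X)$ for every $X\in\Bb$, where $D_\Min(X)\ge 0$. Hence it suffices to show that $R_\Max(X)\ge \E_\P(X)$ for all $X\in\Bb$. To invoke \cite[Theorem 4.62]{MR3859905} (in the sign convention used here, this is the statement that every law-invariant convex risk measure on an atomless probability space that is continuous from above dominates the expectation), I need to check that $R_\Max$ is (a) a convex risk measure on $\Bb$, (b) law-invariant with respect to $\P$, and (c) continuous from above.

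Property (a) is already known modulo convexity: Theorem \ref{thm.main} guarantees that $R_\Max$ is a monetary risk measure, and convexity is inherited as follows. For $X,Y\in\Bb$ and $\la\in[0,1]$, any $X_0,Y_0\in\Bb$ with $X_0\ge X$ and $Y_0\ge Y$ satisfy $\la X_0+(1-\la)Y_0\ge \la X+(1-\la)Y$, so by convexity of $H$,
\[
 R_\Max\bigl(\la X+(1-\la)Y\bigr)\le H\bigl(\la X_0+(1-\la)Y_0\bigr)\le \la H(X_0)+(1-\la)H(Y_0);
\]
taking the infimum over $X_0,Y_0$ yields convexity of $R_\Max$. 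Property (b), law-invariance, follows directly from Proposition \ref{prop.lawinv}, since $(\Om,\FF,\P)$ is atomless and $H$ is law-invariant. Property (c) is a consequence of Corollary \ref{cor.contabove}: continuity from above of $H$ transfers to $R_\Max$ via the estimate $0\le R_\Max(X_n)\le H(X_n)\to 0$ used there.

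With (a)--(c) in hand, \cite[Theorem 4.62]{MR3859905} yields $R_\Max(X)\ge \E_\P(X)$ for all $X\in\Bb$, and combining with $D_\Min\ge 0$ gives $H(X)\ge \E_\P(X)$. I do not expect a real obstacle here; the only point that needs care is aligning the sign convention of \cite{MR3859905} (where losses are negative) with the one adopted in this paper, but the translation is the standard replacement $X\leftrightarrow -X$ and produces precisely the inequality claimed.
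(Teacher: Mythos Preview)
Your proposal is correct and follows essentially the same route as the paper: reduce to $R_\Max$, verify it is a convex, law-invariant risk measure continuous from above via Proposition~\ref{prop.lawinv} and Corollary~\ref{cor.contabove}, and then invoke the classical safety-loading result from F\"ollmer--Schied. The only minor discrepancy is the citation: the statement you describe (domination of the expectation) is \cite[Corollary~4.65]{MR3859905} rather than Theorem~4.62, which is the underlying Kusuoka-type representation; the paper cites Corollary~4.65 directly.
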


\begin{proof}
 This is a direct consequence of Proposition \ref{prop.lawinv}, Corollary \ref{cor.contabove}, and \cite[Corollary 4.65]{MR3859905}.
\end{proof}

\subsection{Superhedging and market consistency}\label{sec.hedging}

The integration of insurance and finance has been a central issue of research in the last years, cf. Schweizer \cite{MR1817231} and the references therein. In this section, we consider premium principles that are consistent with a given financial market (or liquidly traded insurance contracts). We will identify the maximal risk measure in the premium's decomposition as the so-called superhedging risk measure.\\

The financial market is modeled by a linear subspace $M \subset C$, where $C$ is again assumed to be a linear space, and a nonnegative linear price functional $F \colon M \to \mathbb R$. Assuming $M$ to be a linear space and $F\colon M\to \R$ to be linear corresponds to a competitive market without frictions. \mnen{We would like to point out that our model can also be used for markets with frictions. That is, the linearity of the price functional $F$ can be replaced by sublinearity, and $M$ can be assumed to be a convex cone instead of a linear subspace. In this case, $F$ would resemble the ask price for securization products that are traded in the market or, in other words, the price an insurer has to pay for ``selling'' the risk of a claim to the market.} Nonnegativity is a no arbitrage condition as it requires 
\begin{equation}\label{eq.consistency}
F(X)\ge 0
\end{equation} for nonnegative claims $X \ge 0$. Without loss of generality, we assume that $F(1)=1$, i.e. the interest rate that is implicit in $F$ is zero. We call
$$\MM=\big\{ \P\in \ba_+^1 \,\big|\, \forall X_0\in M\colon \E_\P(X_0)=F(X_0)\big\}$$
the set of martingale measures for the financial market.\\

Throughout this section, we consider a sublinear premium principle. We assume that the premia charged by our insurer coincide with market prices on $M$, i.e.~$H|_M=F$. The condition $H|_M=F$ expresses the fact that the insurer cannot charge a premium above market prices due to competition. We introduce the set $$M_0:=\big\{X_0\in M\, |\, F(X_0)=0\big\}$$ of all claims that are traded on the market with price $0$. In the sequel, we consider the superhedging risk measure
\[
 R_*(X):=\inf\big\{m\in \R\, \big|\, \exists X_0\in M_0\colon m+X_0\geq X\big\}\quad \text{for all }X\in \Bb.
\]
The superhedging risk measure amounts to the cost of staying on the safe side with the help of the products that are already being traded liquidly in the market. 
Note that $R_*$ is well-defined, since $M_0$ is nonempty. Moreover, $R_\Max\leq R_*$ since $H|_M=F$.

\begin{proposition}\label{prop.hedging}
Let $H$ be sublinear. Then, the following statements are equivalent:
\begin{enumerate}
 \item[(i)] The maximal risk measure in the decomposition of $H$  is the superhedging functional $R_*$, i.e. $R_\Max=R_*$.
 \item[(ii)] The premium principle $H$ is based on the use of securization products, i.e. for all $X\in C$, there exists some $X_0\in M$ with $X_0\geq X$ and $H(X) \geq F(X_0)$.
 \item[(iii)] The plausible models for $H$ coincide with the martingale measures, i.e. $\PP=\MM$.
\end{enumerate}
\end{proposition}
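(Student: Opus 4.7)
The plan is to pass to the dual side in both directions of the equivalence. By sublinearity and Corollary~\ref{cor.dualrep}, the maximal risk measure admits the dual representation $R_\Max(X)=\max_{\P\in \PP}\E_\P(X)$ for all $X\in \Bb$. A standard Hahn--Banach separation of the convex cone of claims superhedgeable at zero cost yields the analogous dual formula $R_*(X)=\sup_{\P\in \MM}\E_\P(X)$ for all $X\in \Bb$. Two inclusions then hold for free and will do most of the work. First, $R_\Max\leq R_*$, because every $X_0\in M$ lies in $C$ and satisfies $H(X_0)=F(X_0)$, so the infimum defining $R_\Max$ is taken over a superset of the one defining $R_*$. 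Second, $\PP\subset \MM$: for any $\P\in \PP$ and $X_0\in M$, sublinearity of $H$ together with $H|_M=F$ and the linearity of $F$ on $M$ force $\E_\P(X_0)\leq H(X_0)=F(X_0)$ and $\E_\P(-X_0)\leq H(-X_0)=-F(X_0)$, hence $\E_\P(X_0)=F(X_0)$.

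With these preliminaries, (i)$\Leftrightarrow$(iii) is essentially automatic. The implication (iii)$\Rightarrow$(i) is immediate from the two dual representations applied pointwise in $X$. For (i)$\Rightarrow$(iii), I would fix $\P\in \MM$ and $X\in C$ and combine $\E_\P(X)\leq R_*(X)=R_\Max(X)\leq H(X)$ to conclude $\P\in \PP$; together with $\PP\subset \MM$, this yields $\PP=\MM$.

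For (ii)$\Rightarrow$(i), I would fix $X\in \Bb$ and, for each $X_0\in C$ with $X_0\geq X$, apply (ii) to $X_0$ to extract $Y_0\in M$ with $Y_0\geq X_0\geq X$ and $F(Y_0)\leq H(X_0)$. This gives $R_*(X)\leq F(Y_0)\leq H(X_0)$; taking the infimum over $X_0\in C$ with $X_0\geq X$ yields $R_*(X)\leq R_\Max(X)$, which together with $R_\Max\leq R_*$ forces $R_\Max=R_*$. Conversely, for (i)$\Rightarrow$(ii), I would use that $R_*(X)=R_\Max(X)\leq H(X)$ for $X\in C$, pick $Y_0\in M$ with $Y_0\geq X$ and $F(Y_0)$ close to $R_*(X)$, and, exploiting $\R\subset M$, replace $Y_0$ by $X_0:=Y_0+(H(X)-F(Y_0))\in M$, which satisfies $X_0\geq Y_0\geq X$ whenever $H(X)\geq F(Y_0)$ and gives $F(X_0)=H(X)$.

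The delicate point is this last step. If $R_*(X)<H(X)$, the correction $H(X)-F(Y_0)$ can be made nonnegative by choosing $Y_0$ sufficiently close to the infimum, and the construction produces the required $X_0$. If, on the other hand, $R_*(X)=H(X)$ and the infimum defining $R_*$ is not attained, one has to either read (ii) as an approximate statement (for every $\ep>0$ there is a superhedge $X_0\in M$ with $F(X_0)\leq H(X)+\ep$) or invoke attainment under a suitable closedness hypothesis on $M$. Apart from this technicality the whole proof reduces to standard dual manipulations built on the two ``for free'' inclusions established at the outset.
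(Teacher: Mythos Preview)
Your argument follows essentially the same route as the paper's. The paper first rewrites $R_*(X)=\inf\{H(X_0)\mid X_0\in M,\ X_0\geq X\}$ using $H|_M=F$, then observes that this is precisely the maximal risk measure associated with the linear (hence sublinear) premium principle $F$ on $M$, so that Theorem~\ref{thm.convex}/Corollary~\ref{cor.dualrep} yields $R_*(X)=\max_{\P\in\MM}\E_\P(X)$ and the equivalence (i)$\Leftrightarrow$(iii) follows by biduality. Your version is more explicit: you spell out the two ``free'' inclusions $R_\Max\leq R_*$ and $\PP\subset\MM$ and use them directly, and for (ii)$\Rightarrow$(i) you argue pointwise on $\Bb$ rather than invoking the maximality of $R_\Max$ as the paper does. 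These are matters of presentation; the substance is the same.

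The ``delicate point'' you raise in (i)$\Rightarrow$(ii) is genuine, and your diagnosis is accurate. The paper's proof reads: ``$R_*(X)\leq H(X)$ for all $X\in C$. By definition of $R_*$, this means that, for all $X\in C$, there exists some $X_0\in M$ with $X_0\geq X$ and $F(X_0)\leq H(X)$.'' This inference is immediate when $R_*(X)<H(X)$, but when $R_*(X)=H(X)$ it requires the infimum defining $R_*(X)$ to be attained, which is not guaranteed by the standing hypotheses. So your caveat does not reflect a defect in your argument relative to the paper's; rather, you have identified a step the paper passes over. Your constant-shift construction $X_0:=Y_0+(H(X)-F(Y_0))$ is a nice touch but ultimately unnecessary: once one has any $Y_0\in M$ with $Y_0\geq X$ and $F(Y_0)\leq H(X)$, condition~(ii) is already satisfied.
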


\subsection{Premium principles for unbounded random variables}

In this section, we discuss how our notion of premium principles and, in particular, the decomposition into risk and deviation can be extended to spaces of unbounded random variables and more general structures. Instead of $\Bb$, we consider a general vector space $L$ of claims endowed with a preorder $\leq$. Typical examples for $L$ include the following spaces:
\begin{itemize}
 \item Orlicz spaces, Orlicz hearts, and, as special cases, $L^p$-spaces for a fixed dominating reference measure.
 \item Robust Orlicz spaces (cf.\ \cite{lienen} and the references therein) or closures of lattices of bounded continuous functions w.r.t.\ a robust $L^p$-norm (see e.g.\ \cite{Peng}) in the absence of a dominating reference measure.
 \item Spaces of (say bounded) random variables together with the increasing (convex) order.
\end{itemize}
We consider a nonempty subset $C\subset L$ of claims and a fixed nonempty subspace $M\subset C$ of claims, representing cash, i.e.\ highly liquid claims. In the setup of Section \ref{sec.premium}, $M$ corresponds to the set of constant claims. Alternatively, as in the setup of Section \ref{sec.hedging}, $M$ could also correspond to a market of liquidly traded securization products. We assume that $C$ is \textit{exhaustive} in the sense that, for all $X\in L$, there exists some $X_0\in C$ with $X_0\geq X$. Note that, in the setup of Section \ref{sec.premium}, i.e.\ for $L=\Bb$, a sufficient condition for $C$ to be exhaustive is that $C$ contains all constant claims. We further assume that $X+m\in C$ for all $X\in C$ and $m\in M$ and that
\begin{equation}\label{eq.ass.order}
 X+m\leq Y+m\quad \text{for all }m\in M\text{ and }X,Y\in L \text{ with }X\leq Y.
\end{equation}
 This condition is automatically satisfied in preordered vector spaces.
We say that a map $H\colon C\to \R$ is a \textit{premium principle} if
\begin{itemize}
 \item[(P1')] $H(X+m)=H(X)+H(m)$ for all $X\in C$ and $m\in M$.
 \item[(P2')] $H(0)=0$ and $H(X)\geq 0$ for all $X\in C$ with $X\geq 0$.
 \item[(P3')] $-\infty<\inf\big\{H(X_0)\, \big|\, X_0\in C,\, X_0\geq X\big\}$ for all $X\in L$.
\end{itemize}
We would like to briefly discuss the new axioms (P1') - (P3') for a premium principle in view of Definition \ref{def.premium}. (P1') and (P2') are the analogous versions of (P1) and (P2). The new axiom (P3') is a nondegeneracy condition, ensuring that, by increasing the claim size of a fixed claim $X$, the premium cannot become arbitrarily small. In the setup of Section \ref{sec.premium}, i.e.\ for $L=\Bb$, (P3') is automatically satisfied since, by (P2), $H(X_0)\geq \inf X$ for all $X\in \Bb$ and $X_0\geq X$. We say that a map $R\colon L\to \R$ is a \textit{risk measure} if
\begin{itemize}
 \item[(R1')] $R(X+m)=R(X)+R(m)$ for all $X\in C$ and $m\in M$.
 \item[(R2')] $R(0)=0$ and $R(X)\leq R(Y)$ for all $X,Y\in L$ with $X\leq Y$.
\end{itemize}
A map $D\colon C\to \R$ is called a \textit{deviation measure} if
\begin{itemize}
 \item[(D1')] $D(X+m)=D(X)$ for all $X\in C$ and $m\in M$.
 \item[(D2')] $D(0)=0$ and $D(X)\geq 0$ for all $X\in C$.
\end{itemize}
We obtain the following extension of Theorem \ref{thm.main}.

\begin{theorem}\label{thm.main1}
 Let $H\colon C\to \R$ be a premium principle, and define
 \[
  R_\Max(X):=\inf\big\{ H(X_0)\, \big|\, X_0\in C,\, X_0\geq X\big\}\quad \text{for all }X\in L.
 \]
 Then, $R_\Max\colon L\to \R$ is a risk measure with $R_\Max(X)\leq H(X)$ for all $X\in C$ and $R_\Max(m)=H(m)$ for all $m\in M$. In particular, $D_\Min(X):=H(X)-R_\Max(X)$, for $X\in C$, defines a deviation measure, and
 \[
  H(X)=R_\Max(X)+D_\Min(X).
 \]
 Let $R\colon L\to \R$ be a risk measure and $D\colon C\to \R$ be a deviation measure with
 \[
  H(X)=R(X)+D(X).
 \]
 Then, $R(X)\leq R_\Max(X)$ for all $X\in L$ and $D(X)\geq D_\Min(X)$ for all $X\in C$.
\end{theorem}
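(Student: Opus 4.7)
The plan is to follow the proof of Theorem \ref{thm.main} closely, with additional bookkeeping for the more general order structure and the weaker $M$-cash-invariance axiom (P1'). The first step is to verify that $R_\Max\colon L\to \R$ is a well-defined real-valued map: exhaustiveness of $C$ ensures that the feasible set $\{X_0\in C\colon X_0\ge X\}$ is nonempty, giving $R_\Max(X)<\infty$, while (P3') is precisely the condition $R_\Max(X)>-\infty$. The inequality $R_\Max(X)\le H(X)$ for $X\in C$ follows by taking $X_0=X$. For $R_\Max(m)=H(m)$ with $m\in M$, one uses that $M$ is a subspace and $C+M\subseteq C$, so $C-M\subseteq C$ as well; any $X_0\in C$ with $X_0\ge m$ then satisfies $X_0-m\in C$ and $X_0-m\ge 0$ by \eqref{eq.ass.order}, so (P1') and (P2') give $H(X_0)=H(X_0-m)+H(m)\ge H(m)$.

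Next I would check the risk measure axioms for $R_\Max$. Monotonicity (R2') is immediate from the inclusion $\{X_0\in C\colon X_0\ge Y\}\subseteq\{X_0\in C\colon X_0\ge X\}$ whenever $X\le Y$; together with the previous step this also yields $R_\Max(0)=H(0)=0$. For the $M$-additivity (R1') at $X\in L$ and $m\in M$, I would exploit the equivalence $X_0\ge X+m\iff X_0-m\ge X$, whose two directions are instances of \eqref{eq.ass.order} applied with $\pm m\in M$. Substituting $Y_0:=X_0-m\in C$ in the infimum and invoking (P1'), one obtains
\[
R_\Max(X+m)=\inf\{H(Y_0)+H(m)\colon Y_0\in C,\,Y_0\ge X\}=R_\Max(X)+R_\Max(m).
\]

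With $R_\Max$ a risk measure in hand, $D_\Min:=H-R_\Max$ on $C$ inherits (D1') by subtracting (R1') from (P1'), and (D2') from $R_\Max(0)=H(0)=0$ together with $R_\Max\le H$ on $C$; the decomposition $H=R_\Max+D_\Min$ on $C$ holds by construction. For the extremality assertion, let $H=R+D$ be any decomposition with $R$ a risk measure on $L$ and $D$ a deviation measure on $C$. For $X\in L$ and any $X_0\in C$ with $X_0\ge X$, (R2') and (D2') give $R(X)\le R(X_0)=H(X_0)-D(X_0)\le H(X_0)$; taking the infimum yields $R\le R_\Max$ on $L$, and hence $D=H-R\ge H-R_\Max=D_\Min$ on $C$. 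The only real subtlety beyond ensuring finiteness via (P3') is notational: since $L$ and $C$ carry no canonical embedding of $\R$, the constant term in the proof of Theorem \ref{thm.main} must be systematically replaced by $H(m)=R_\Max(m)$ for $m\in M$, and \eqref{eq.ass.order} has to be invoked in both directions to pass between $X_0\ge X+m$ and $X_0-m\ge X$.
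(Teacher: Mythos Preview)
Your proposal is correct and follows essentially the same route as the paper's proof. The only notable difference is in the verification of (R1'): whereas the paper proves the two inequalities $R_\Max(X+m)\le R_\Max(X)+H(m)$ and $R_\Max(X)+H(m)\le R_\Max(X+m)$ separately (the latter via the identity $H(-m)=-H(m)$), you obtain equality in one stroke by observing that $X_0\mapsto X_0-m$ is a bijection between the two feasible sets, which is slightly more direct but amounts to the same idea.
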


We now specialize again on the convex case, assuming that $C$ is a linear space. As in Section \ref{sec.dualrep}, we say that a map $H\colon C\to \R$ is convex if
\[
 H\big(\la X+(1-\la )Y\big)\leq \la H(X)+(1-\lambda) H(Y)\quad \text{for all }X,Y\in C\text{ and }\la\in [0,1].
\]
We start with the following observation.

\begin{remark}\label{rem.convcash}
 Let $C$ be a linear space and $H\colon C\to \R$ be a convex premium principle. Then, $H|_M$ is linear. In fact, $H(-m)=H(m)$ for all $m\in M$ since
 \[
  0=H(0)=H(m-m)=H(m)+H(-m)\quad \text{for all }m\in M.
 \]
 Again, since $H(0)=0$, $H(km)=kH(m)$ for all $k\in \N$ and $m\in M$. By convexity of $H$, it follows that $H(\la m)\leq \la H(m)$ for all $\la\in \R$ and $m\in M$, which immediately yields that $H(\lambda m)=\la H(m)$ for all $\lambda\in \R$ and $m\in M$. Since $H(m+n)=H(m)+H(n)$ for all $m,n\in M$, it follows that $H|_M$ is linear.
\end{remark}

We denote by $L'_+$ the set of all monotone linear functionals $\mu\colon L\to R$. Recall that, by the Namioka-Klee theorem, every monotone linear functional on a Banach lattice is continuous. Therefore, if $L$ is a Banach lattice, $L_+'$ is a subset of the topological dual space $L'$ of $L$. For $L=\Bb$, $L'_+$ is the set of all finitely additive measures $\mu$ $\mu(\Omega)<\infty$. For $L=L^p$ with $p\in [1,\infty)$, it holds $L'_+=L^q_+$, where $q\in (1,\infty]$ is the conjugate exponent.\ The following extension of Theorem \ref{thm.convex} forms the basis for an analogous discussion as in Section \ref{sec.dualrep} in a generalized setup.

\begin{theorem}\label{thm.convex1}
Let $\leq$ be \textit{vector preorder}, i.e. a preorder consistent with addition and scalar multiplication, $C$ be a linear space, and $H\colon C\to \R$ be a convex premium principle. For $\mu\in L'_+$, let
\[
 H^*(\mu):=\sup_{X\in C} \mu (X)-H(X)\in [0,\infty].
\]
Then,
\[
 R_\Max(X)=\max_{\mu\in L'_+} \mu (X)-H^*(\mu)\quad \text{for all }X\in \Bb.
\]
Moreover,
 \begin{equation}\label{eq.convexmain1}
  H^*(\mu)=\sup_{X\in L} \mu (X)-R_\Max(X)\quad \text{for all }\mu\in L'_+,
 \end{equation}
 and $H^*(\mu)<\infty$ implies that $\mu (m)=H(m)$ for all $m\in M$.
\end{theorem}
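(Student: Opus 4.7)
\textbf{Plan for the proof of Theorem \ref{thm.convex1}.}
First I would collect basic properties of $R_\Max$. By Theorem \ref{thm.main1}, $R_\Max\colon L\to\R$ is a well-defined risk measure with $R_\Max|_C\leq H$ and $R_\Max|_M=H|_M$; finiteness on all of $L$ uses that $C$ is exhaustive and that (P3') bounds $R_\Max$ below. Convexity of $R_\Max$ is inherited from $H$: for $X,Y\in L$, $\la\in[0,1]$, and $X_0,Y_0\in C$ with $X_0\geq X$, $Y_0\geq Y$, the combination $\la X_0+(1-\la)Y_0\in C$ dominates $\la X+(1-\la)Y$ by the vector preorder property \eqref{eq.ass.order}, so $R_\Max(\la X+(1-\la)Y)\leq \la H(X_0)+(1-\la)H(Y_0)$; taking infima yields convexity.

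Next I would establish \eqref{eq.convexmain1} by two direct estimates. The bound $H^*(\mu)\leq \sup_{Y\in L}\mu(Y)-R_\Max(Y)$ follows from $H(Y)\geq R_\Max(Y)$ on $C\subset L$. For the reverse, fix $Y\in L$ and $X_0\in C$ with $X_0\geq Y$; monotonicity of $\mu\in L'_+$ gives $\mu(Y)-H(X_0)\leq \mu(X_0)-H(X_0)\leq H^*(\mu)$, and taking the infimum over such $X_0$ yields $\mu(Y)-R_\Max(Y)\leq H^*(\mu)$, hence the claimed inequality after taking the supremum in $Y$.

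The core step is attainment of the maximum in the representation. Fix $X\in L$. Since $R_\Max$ is convex and real-valued, for each $Z\in L$ the map $t\mapsto R_\Max(X+tZ)$ is a finite convex function on $\R$, so the one-sided directional derivative
\[
 p(Z):=\inf_{t>0}\frac{R_\Max(X+tZ)-R_\Max(X)}{t}
\]
is finite and $p\colon L\to\R$ is sublinear (standard convex analysis). By the algebraic Hahn-Banach theorem there exists a linear $\mu\colon L\to\R$ with $\mu\leq p$. Monotonicity of $R_\Max$ forces $p(Z)\leq R_\Max(X+Z)-R_\Max(X)\leq 0$ whenever $Z\leq 0$, so $\mu(Z)\leq 0$ and thus $\mu\in L'_+$. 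The subgradient inequality $\mu(Y-X)\leq p(Y-X)\leq R_\Max(Y)-R_\Max(X)$ rearranges to $\mu(Y)-R_\Max(Y)\leq \mu(X)-R_\Max(X)$ for every $Y\in L$; combined with \eqref{eq.convexmain1} this gives $H^*(\mu)=\mu(X)-R_\Max(X)<\infty$, so the maximum is attained. The opposite inequality $R_\Max(X)\geq \mu(X)-H^*(\mu)$ for every $\mu\in L'_+$ follows at once from \eqref{eq.convexmain1}.

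For the final clause, Remark \ref{rem.convcash} asserts that $H|_M$ is linear, so for $m\in M$ and $t\in\R$ we have $tm\in M\subset C$ with $H(tm)=tH(m)$; hence $H^*(\mu)\geq \mu(tm)-H(tm)=t\bigl(\mu(m)-H(m)\bigr)$ for all $t\in\R$, which forces $\mu(m)=H(m)$ whenever $H^*(\mu)<\infty$. The main obstacle is the subgradient construction: since $L$ carries no topology, we cannot invoke a topological Fenchel-Moreau theorem and must extract $\mu$ purely algebraically from the sublinearity of the directional derivative $p$, relying on monotonicity of $R_\Max$ to guarantee $\mu\in L'_+$.
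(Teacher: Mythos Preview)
Your argument is correct, and the organization differs from the paper's in two noteworthy ways. First, you prove \eqref{eq.convexmain1} \emph{before} the representation, exploiting directly that each $\mu\in L'_+$ is monotone: from $\mu(Y)\leq\mu(X_0)$ for $X_0\geq Y$ you pass to $\mu(Y)-R_\Max(Y)\leq H^*(\mu)$ by taking the infimum in $X_0$. The paper instead first obtains the representation of $R_\Max$ over \emph{all} linear functionals with finite conjugate, proves a posteriori that such functionals are monotone via the scaling estimate $\mu(X-Y)\leq \la\bigl(R_\Max(\tfrac1\la(X-Y))+R_\Max^*(\mu)\bigr)\to 0$, and only then shows $H^*=R_\Max^*$ on $L'_+$ by introducing the auxiliary risk measure $R(X)=\sup_{\mu\in L'_+}\mu(X)-H^*(\mu)$ and invoking the maximality of $R_\Max$ from Theorem~\ref{thm.main1}. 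Second, for attainment you construct an explicit subgradient through the directional derivative $p$ and the algebraic Hahn--Banach theorem, verifying $\mu\in L'_+$ from $p(Z)\leq 0$ for $Z\leq 0$; the paper simply asserts that Hahn--Banach yields the dual representation and then checks monotonicity separately. Your route is slightly more elementary and avoids the detour through the maximality property, while the paper's argument has the side benefit of showing that \emph{every} linear functional with $R_\Max^*(\mu)<\infty$ is already in $L'_+$. The treatment of the clause $\mu|_M=H|_M$ is the same in both.
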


\section{Conclusion}\label{sec.conclusion}
We clarify the structure of premium principles in a very general setting that allows for model uncertainty. Every premium principle is the sum of a monetary risk measure and a deviation measure. We also relate the maximal risk measure in such a decomposition to hedging practices in finance, and thus provide a link between insurance and financial mathematics. 

We see the main value of our approach in providing a conceptual framework to think about premium principles even in situations of model uncertainty. As such, we provide a theoretical justification for different approaches that are already being used, and we provide a guideline for   new approaches that might come in the future when the practices of finance and insurance will grow ever closer together.

\begin{appendix}
\section{Proofs}\label{App.A}

For the sake of clarity, we prove Theorem \ref{thm.main} before Theorem \ref{cor.main}.

\begin{proof}[Proof of Theorem \ref{thm.main}]
 First, note that $R_\Max\colon \Bb\to \R$ is well-defined since $\sup X\in \Bb$ with $\sup X\geq X$ and $H(X_0)\geq H\big(\inf X\big)$ for all $X_0\in C$ with $X_0\geq X$. By definition, $R_\Max(X)\leq H(X)$ for all $X\in C$. Hence, $D_\Min(X)=H(X)- R_\Max(X)\geq 0$ for all $X\in C$. Moreover, $H(X_0)\geq H(0)=0$ for all $X_0\in C$ with $X_0\geq 0$, which implies that $R_\Max(0)=0$. In particular, $D_\Min(0)=H(0)- R_\Max(0)=0$. We will now show that $R_\Max$ defines a risk measure. First, observe that $R_\Max(X)\leq H(Y_0)$ for $X,Y\in \Bb$ with $X\leq Y$ and $Y_0\in C$ with $Y_0\geq Y$. Taking the infimum over all $Y_0\in C$ with $Y_0\geq Y$, it follows that $R_\Max(X)\leq R_\Max(Y)$. Now, let $X\in \Bb$, $m\in \R$ and $X_0\in C$ with $X_0\geq X$. Then, 
 \[
  R_\Max(X+m)\leq H(X_0+m)= H(X_0)+m.
 \]
 Taking the infimum over all $X_0\in C$ with $X_0\geq X$ implies that $R_\Max(X+m)\leq R_\Max(X)+m$. On the other hand,
 \[
  R_\Max(X)+m= R_\Max(X+m-m)+m\leq R_\Max(X+m).
 \]
 This also shows that, for $X\in C$ and $m\in \R$, $$D_\Min(X+m)=H(X+m)- R_\Max(X+m)=H(X)-R_\Max(X)=D_\Min(X).$$
 Let $R\colon \Bb\to \R$ be a risk measure with $ R(X)\leq H(X)$ for all $X\in C$. Then, for all $X\in \Bb$ and $X_0\in C$ with $X_0\geq X$,
 \[
  R(X)\leq R(X_0)\leq H(X_0).
 \]
 Taking the infimum over all $X_0\in C$ with $X_0\geq X$, we may conclude that $R(X)\leq R_\Max(X)$ for all $X\in \Bb$.
\end{proof}

\begin{proof}[Proof of Theorem \ref{cor.main}]
 This is a direct consequence of Theorem \ref{thm.main}, choosing $R=\R_\Max$ and $D=D_\Min$.
\end{proof}

\begin{proof}[Proof of Theorem \ref{thm.convex}]
 We first show that $R_\Max\colon \Bb\to \R$ is convex. Let $X,Y\in \Bb$ and $\la\in [0,1]$. Then, for $X_0,Y_0\in C$ with $X_0\geq X$ and $Y_0\geq Y$,
 \[
 R_\Max\big(\la X+(1-\la)Y\big)\leq H \big(\la X_0+(1-\la)Y_0\big)\leq \la H(X_0)+(1-\la)H(Y_0).
 \]
 Taking the infimum over all $X_0,Y_0\in C$ with $X_0\geq X$ and $Y_0\geq Y$, we obtain that $R_\Max$ is convex. Since $R_\Max$ is a convex risk measure, it follows that, see e.g.~F\"ollmer and Schied \cite[Theorem 4.12]{MR3859905},
 \[
  R_\Max(X)=\max_{\P\in \ba_+^1} \E_\P(X)-R_\Max^*(\P)\quad \text{for all }X\in \Bb,
 \]
 where $R_\Max^*(\P):=\sup_{X\in \Bb} \E_\P(X)-R_\Max(X)$ for $\P\in \ba_+^1$. It remains to show \eqref{eq.convexmain}, i.e. $H^*(\P)=R_\Max^*(\P)$ for all $\P\in \ba_+^1$. Since $R_\Max(X)\leq H(X)$ for all $X\in C$, it follows that
 \[
  R_\Max^*(\P)\geq \sup_{X\in C} \E_\P (X)-R_\Max(X)\geq H^*(\P) \quad\text{for all }\P\in \ba_+^1.  
 \]
 In particular, there exists some $\P\in \ba_+^1$ with $H^*(\P)<\infty$. Therefore,
 \[
  R(X):=\sup_{\P\in \ba_+^1} \E_\P(X)-H^*(\P),\quad \text{for }X\in \Bb,
 \]
 defines a risk measure. Since $R_\Max^*(\P)\geq H^*(\P)$ for all $\P\in \ba_+^1$, it follows that
 \[
  R_\Max(X)\leq R(X)\leq H(X)\quad \text{for all }X\in C.
 \]
 By the maximality of $R_\Max$, we may conclude that $R_\Max=R$. In particular,
 \[
  H^*(\P)\geq \E_\P( X)-R(X)=\E_\P(X)-R_\Max (X) \quad \text{for all }X\in \Bb\text{ and }\P\in \ba_+^1.
 \]
 By definition of $R_\Max^*$, it follows that $H^*(\P)\geq R_\Max^*(\P)$ for all $\P\in \ba_+^1$. 
 \end{proof}

 \begin{proof}[Proof of Proposition \ref{lem.sublinearmon}]
 Trivially, (i) implies (ii). We first show that (ii) implies (P2), let $X\in C$ with $X\geq 0$. Then, by Condition (ii),
 \[
  0\leq-H(-X)\leq H(X),
 \]
 where the second inequality follows from $0=2H(0)\leq H(X)+H(-X)$. In order to prove the monotonicity, first notice that, due to (P1) and (ii),
 \begin{equation}\label{eq.proofmon}
  H(X)=H(X-\sup X)+\sup X\leq \sup X\quad \text{for all }X\in C.
 \end{equation}
 Now, let $X,Y\in C$ with $X\leq Y$. Then, by \eqref{eq.proofmon}, for all $\lambda\in (0,1)$,
 \begin{align*}
  H(X)&\leq \la H(Y)+(1-\la)H\bigg(\frac{X-\la Y}{1-\la}\bigg)\leq \la H(Y)+\sup (X-\la Y)\\
  &\leq \la H(Y)+(1-\la)\sup X.
 \end{align*}
 Letting $\lambda \to 1$, we obtain that $H(X)\leq H(Y)$. 
 \end{proof}

 \begin{proof}[Proof of Proposition \ref{prop.monamb}]
 For all $X\in \Bb$, 
 \[
  \Amb_\PP(X)= R_\Max (X)-\frac{1}{2}\big(R_\Max(X)-R_\Max(-X)\big)
 \]
 and
 \[
  \max_{\Q\in \PP} \E_\Q( X)-\E_\P (X)=R_\Max(X)-\E_\P(X).
 \]
 Therefore, $\Amb_\PP(X)=\max_{\Q\in \PP} \E_\Q( X)-\E_\P(X)$ for all $X\in \Bb$ if and only if $\E_\P(X)=\frac{1}{2}\big(R_\Max(X)-R_\Max(-X)\big)$ for all $X\in \Bb$. On the other hand, if $\E_\P(X)=\frac{1}{2}\big(R_\Max(X)-R_\Max(-X)\big)$ for all $X\in \Bb$, then, for all $X\in \Bb$ and $\Q\in \PP$,
 \[
  2\E_\P(X)-\E_\Q(X)\leq 2\E_\P(X)+\R_\Max(-X)= R_\Max(X),
 \]
 i.e. $2\P-\Q\in \PP$. Next, assume that $2\P -\Q\in \PP$ for all $\Q\in \PP$. Then, for all $X\in \Bb$,
 \begin{align*}
  \frac{1}{2}\big(R_\Max(X)-R_\Max(-X)\big)&=\frac{1}{2}\big(\max_{\Q\in \PP} \E_\Q( X)+\min_{\Q'\in \PP} \E_{\Q'}( X)\big)\\
  &\leq \frac{1}{2}\max_{\Q\in \PP}\big(\E_\Q( X)+(2\E_\P( X)-\E_\Q( X)\big)=\E_\P( X).
 \end{align*}
Using a symmetry argument, this implies that $\frac{1}{2}\big(R_\Max(X)-R_\Max(-X)\big)=\E_\P( X)$ for all $X\in \Bb$. We have thus established the equivalence of (i) - (iii). In order to prove the remaining equivalence, first observe that, for all $\P\in \PP$,
 \begin{align*}
  \E_\P(X)+D_\P(X)&=H(X)=R_\Max (X)+D_\Min(X)\\
  &=\frac{1}{2}\big(R_\Max(X)-R_\Max(-X)\big)+D_\Min(X)+\Amb_\PP(X)
 \end{align*}
 The equivalence between (i) and (iv) in now an immediate consequence of the previous equation. Under (ii), it follows that $\E_\Q (X)\leq 2\E_\P( X)$ for all $X\in \Bb$ with $X\geq 0$ and all $\Q \in \PP$. Choosing $X=1_N$ for $N\in \FF$ with $\P (N)=0$, it follows that every $\Q\in \PP$ is absolutely continuous w.r.t.~$\P$. On the other hand, let $\Q\in \PP$ and $(X_n)_{n\in \N}\subset \Bb$ with $X_{n+1}\leq X_n$ for all $n\in \N$ and $\inf_{n\in \N}X_n=0$. If $\P$ is countably additive, then
 \[
  0\leq \E_\Q( X_n)\leq 2\E_\P(X_n)\to 0\quad \text{as }n\to \infty,
 \]
 which shows that $\Q$ is countably additive.
\end{proof}

 \begin{proof}[Proof of Proposition \ref{prop.dominance}]
 Let $X,Y\in \Bb$ with $X=Y$ $\P$-a.s. Let $Y_0\in \Bb$ with $Y_0\geq Y$. Then, $X_0:=Y_01_{\{X=Y\}}+X1_{\{X\neq Y\}} \in \Bb$ with $X_0\geq X$ and $X_0=Y_0$ $\P$-a.s. Hence,
 \[
  R_\Max(X)\leq H(X_0)=H(Y_0).
 \]
 Taking the infimum over all $Y_0\in \Bb$ with $Y_0\geq Y$, we obtain that $R_\Max(X)\leq R_\Max(Y)$. A symmetry argument yields that $R_\Max(X)=R_\Max(Y)$.
\end{proof}

\begin{proof}[Proof of Proposition \ref{prop.lawinv}]
For $X\in \Bb$ and $y\in (0,1)$, let $\P_X^{-1}(y):=\inf \{x\in \R\, |\, \P_X(x)> y\}$ denote the right-continuous inverse of $\P_X$. Note that, for $X\in \Bb$, $\P_X^{-1}\colon (0,1)\to \R$ is bounded and $\P_X^{-1}(U)\in \Bb$ has the same distribution as $X$ for every uniformly distributed random variable $U\in \Bb$. Let $X,Y\in \Bb$ with $\P_X=\P_Y$. Moreover, let $Y_0\in \Bb$ with $Y_0\geq Y$. Then, $\P_X^{-1}\leq\P_Y^{-1}\leq \P_{Y_0}^{-1}$. By \cite[Lemma A.32]{MR3859905}, there exists some uniformly distributed random variable $U\in \Bb$ with $X=\P_X^{-1}(U)$. Then,
\[
 R_\Max(X)\leq R_\Max\big(\P_{Y_0}^{-1}(U)\big)\leq H\big(\P_{Y_0}^{-1}(U)\big)=H(Y_0).
\]
Taking the infimum over all $Y_0\in \Bb$ with $Y_0\geq Y$, it follows that $R_\Max(X)\leq R_\Max(Y)$. By a symmetry argument, we may conclude that $R_\Max(X)=R_\Max(Y)$.
\end{proof}

\begin{proof}[Proof of Proposition \ref{prop.hedging}]
 First observe that, since $H|_M=F$
 \[
  R_*(X)=\inf\big\{H(X_0)\, \big|\,  X_0\in M,\, X_0\geq X\big\}\quad \text{for all }X\in \Bb.
 \]
 Therefore, by Theorem \ref{thm.convex}, statement (i) is equivalent to statement (iii). It remains to show the equivalence of (i) and (ii). In order to establish this equivalence, first assume that $R_\Max=R_*$. Then, $R_*(X)\leq H(X)$ for all $X\in C$. By definition of $R_*$, this means that, for all $X\in C$, there exists some $X_0\in M$ with $X_0\geq X$ and $F(X_0)\leq H(X)$. Now, assume that, for all $X\in C$, there exists some $X_0\in M$ with $X_0\geq X$ and $F(X_0)\leq H(X)$. Then, by definition of $R_*$, it follows that $R_*(X)\leq H(X)$ for all $X\in C$. The maximality of $R_\Max$ together with $R_\Max\leq R_*$, which is due to $H|_M=F$, thus implies that $R_\Max=R_*$.
\end{proof}
  
 \begin{proof}[Proof of Theorem \ref{thm.main1}]
 The fact that $C$ is exhaustive together with (P3') ensures that $R_\Max\colon L\to \R$ is well-defined. By definition, $R_\Max(X)\leq H(X)$ for all $X\in C$. Hence, $D_\Min(X)=H(X)- R_\Max(X)\geq 0$ for all $X\in C$. Moreover, by (P1'), (P2'), and \eqref{eq.ass.order}, $$H(X_0)=H(X_0-m)+H(m)\geq H(m)\quad\text{for all }m\in M\text{ and }X_0\in C\text{ with }X_0\geq m,$$ which implies that $R_\Max(m)=H(m)$ for all $m\in M$. Since $0\in M$, $R_\Max(0)=H(0)=0$ and $D_\Min(0)=H(0)- R_\Max(0)=0$. Next, we show that $R_\Max$ defines a risk measure. First, observe that $R_\Max(X)\leq H(Y_0)$ for $X,Y\in L$ with $X\leq Y$ and $Y_0\in C$ with $Y_0\geq Y$. Taking the infimum over all $Y_0\in C$ with $Y_0\geq Y$, it follows that $R_\Max(X)\leq R_\Max(Y)$ for all $X,Y\in L$ with $X\leq Y$. Now, let $X\in L$, $m\in M$ and $X_0\in C$ with $X_0\geq X$. Then, by \eqref{eq.ass.order}, it follows that $X_0+m\geq X+m$, which, in turn, implies that
 \[
  R_\Max(X+m)\leq H(X_0+m)= H(X_0)+H(m)=H(X_0)+H(m).
 \]
 Taking the infimum over all $X_0\in C$ with $X_0\geq X$ implies that $R_\Max(X+m)\leq R_\Max(X)+H(m)$. On the other hand,
 \begin{align*}
  R_\Max(X)+H(m)&= R_\Max(X+m-m)+H(m)\leq R_\Max(X+m)+H(-m)+H(m)\\
  &=R_\Max(X+m),
 \end{align*}
 where the last equality follows from the fact that $0=H(0)=H(m)+H(-m)$, i.e. $H(-m)=-H(m)$, for all $m\in M$. Since $H(m)=R_\Max(m)$ for all $m\in M$, it follows that $R_\Max$ is cash additive. This also shows that, for $X\in C$ and $m\in M$, $$D_\Min(X+m)=H(X+m)- R_\Max(X+m)=H(X)-R_\Max(X)=D_\Min(X).$$
 Let $R\colon \Bb\to \R$ be a risk measure and $D\colon C\to \R$ be a deviation measure with $H(X)=R(X)+D(X)$ for all $X\in C$. Then, for all $X\in L$ and $X_0\in C$ with $X_0\geq X$,
 \[
  R(X)\leq R(X_0)=H(X_0)-D(X_0)\leq H(X_0).
 \]
 Taking the infimum over all $X_0\in C$ with $X_0\geq X$, we may conclude that $R(X)\leq R_\Max(X)$ for all $X\in L$ and $D_\Min(X)=H(X)-R_\Max(X)\leq H(X)-R(X)=D(X)$ for all $X\in C$.
\end{proof}

\begin{proof}[Proof of Theorem \ref{thm.convex1}]
 We first show that $R_\Max\colon L\to \R$ is convex. Let $X,Y\in L$ and $\la\in [0,1]$. Then, for $X_0,Y_0\in C$ with $X_0\geq X$ and $Y_0\geq Y$,
 \[
 R_\Max\big(\la X+(1-\la)Y\big)\leq H \big(\la X_0+(1-\la)Y_0\big)\leq \la H(X_0)+(1-\la)H(Y_0).
 \]
 Taking the infimum over all $X_0,Y_0\in C$ with $X_0\geq X$ and $Y_0\geq Y$, we obtain that $R_\Max$ is convex. Let $R_\Max^*(\mu):=\sup_{X\in L} \mu (X)-R_\Max(X)$ for every linear functional $\mu\colon L\to \R$. The Hahn-Banach theorem implies that
 \[
  R_\Max(X)=\max_{\mu\in \PP} \mu (X)-R_\Max^*(\mu)\quad \text{for all }X\in L,
 \]
 where $\PP$ is the set of all linear functionals $\mu\colon L\to R$ with $R_\Max^*(\mu)<\infty$. Since, by Remark \ref{rem.convcash}, $R_\Max|_M=H|_M$ is linear and $\mu(m)\leq R_\Max(m)+R_\Max^*(\mu)$ for all $m\in M$, it follows that $\mu |_M=R_\Max|_M=H|_M$ for all $\mu\in \PP$. Let $\mu\in \PP$ and $X,Y\in L$ with $X\leq Y$. Then,
 \[
  \mu (X)-\mu (Y)=\la \mu \big(\tfrac{1}{\lambda}(X-Y)\big)\leq \la \Big(R_\Max\big(\tfrac{1}{\lambda}(X-Y)\big)+R_\Max^*(\mu)\Big)\leq \la R_\Max^*(\mu)
 \]
 for all $\la\in (0,\infty)$. Letting $\la\to 0$, it follows that $\mu (X)\leq \mu Y$, i.e. $\mu$ is monotone. In particular, 
  \[
  R_\Max(X)=\max_{\mu\in L'_+} \mu (X)-R_\Max^*(\mu)\quad \text{for all }X\in L.
 \]
 It remains to show \eqref{eq.convexmain1}, i.e. $H^*(\mu)=R_\Max^*(\mu)$ for all $\mu\in L'_+$. Since $R_\Max(X)\leq H(X)$ for all $X\in C$, it follows that
 \[
  R_\Max^*(\mu)\geq \sup_{X\in C} \mu (X)-R_\Max(X)\geq H^*(\mu) \quad\text{for all }\mu\in L'_+.
 \]
 In particular, there exists some $\mu\in L'_+$ with $H^*(\mu)<\infty$. Therefore,
 \[
  R(X):=\sup_{\mu\in L'_+} \mu (X)-H^*(\mu),\quad \text{for }X\in L,
 \]
 defines a risk measure. Since $R_\Max^*(\mu)\geq H^*(\mu)$ for all $\P\in L'_+$, it follows that
 \[
  R_\Max(X)\leq R(X)\leq H(X)\quad \text{for all }X\in C.
 \]
 By the maximality of $R_\Max$, we may conclude that $R_\Max=R$. In particular,
 \[
  H^*(\mu)\geq \mu (X)-R(X)=\mu (X)-R_\Max (X) \quad \text{for all }X\in L\text{ and }\mu\in L'_+.
 \]
 By definition of $R_\Max^*$, it follows that $H^*(\mu)\geq R_\Max^*(\mu)$ for all $\mu\in L'_+$. 
 \end{proof}

\end{appendix}


\begin{thebibliography}{10}

\bibitem{MR3791478}
H.~Albrecher, J.~Beirlant, and J.~L. Teugels.
\newblock {\em Reinsurance: actuarial and statistical aspects}.
\newblock Wiley Series in Probability and Statistics. John Wiley \& Sons, Inc.,
  Hoboken, NJ, 2017.

\bibitem{bauer2010pricing}
D.~Bauer, M.~B{\"o}rger, and J.~Ru{\ss}.
\newblock On the pricing of longevity-linked securities.
\newblock {\em Insurance Math. Econom.}, 46(1):139--149, 2010.

\bibitem{MR3388119}
F.~Biagini, T.~Rheinl\"{a}nder, and J.~Widenmann.
\newblock Hedging mortality claims with longevity bonds.
\newblock {\em Astin Bull.}, 43(2):123--157, 2013.

\bibitem{MR580671}
H.~B\"{u}hlmann.
\newblock An economic premium principle.
\newblock {\em Astin Bull.}, 11(1):52--60, 1980/81.

\bibitem{Cairns2000}
A.~J. Cairns.
\newblock A discussion of parameter and model uncertainty in insurance.
\newblock {\em Insurance: Mathematics and Economics}, 27(3):313--330, 2000.

\bibitem{MR1932757}
E.~Castagnoli, F.~Maccheroni, and M.~Marinacci.
\newblock Insurance premia consistent with the market.
\newblock {\em Insurance Math. Econom.}, 31(2):267--284, 2002.

\bibitem{Peng}
L.~Denis, M.~Hu, and S.~Peng.
\newblock Function spaces and capacity related to a sublinear expectation:
  application to {$G$}-{B}rownian motion paths.
\newblock {\em Potential Anal.}, 34(2):139--161, 2011.

\bibitem{MR797503}
O.~Deprez and H.~U. Gerber.
\newblock On convex principles of premium calculation.
\newblock {\em Insurance Math. Econom.}, 4(3):179--189, 1985.

\bibitem{MR2324561}
D.~Filipovi\'{c} and M.~Kupper.
\newblock Monotone and cash-invariant convex functions and hulls.
\newblock {\em Insurance Math. Econom.}, 41(1):1--16, 2007.

\bibitem{MR1932379}
H.~F\"{o}llmer and A.~Schied.
\newblock Convex measures of risk and trading constraints.
\newblock {\em Finance Stoch.}, 6(4):429--447, 2002.

\bibitem{MR3859905}
H.~F\"{o}llmer and A.~Schied.
\newblock {\em Stochastic finance}.
\newblock De Gruyter Graduate. De Gruyter, Berlin, 2016.
\newblock An introduction in discrete time, Fourth revised and extended edition
  of [ MR1925197].

\bibitem{MR1000102}
I.~Gilboa and D.~Schmeidler.
\newblock Maxmin expected utility with nonunique prior.
\newblock {\em J. Math. Econom.}, 18(2):141--153, 1989.

\bibitem{riskbook}
S.~Gutterman.
\newblock {\em Chapter 17 - Risk and Uncertainty}.
\newblock IAA Risk Book. 2017.

\bibitem{kaas2008m}
R.~Kaas, M.~Goovaerts, J.~Dhaene, and M.~Denuit.
\newblock {\em Modern Actuarial Risk Theory}.
\newblock Springer, Berlin and Heidelberg, 2008.

\bibitem{MR2171327}
P.~Klibanoff, M.~Marinacci, and S.~Mukerji.
\newblock A smooth model of decision making under ambiguity.
\newblock {\em Econometrica}, 73(6):1849--1892, 2005.

\bibitem{lienen}
F.~B. Liebrich and M.~Nendel.
\newblock Robust {O}rlicz spaces: observations and caveats.
\newblock {\em Preprint}, arXiv:2009.09007, 2020.

\bibitem{MR4034687}
F.~Liu, J.~Cai, C.~Lemieux, and R.~Wang.
\newblock Convex risk functionals: representation and applications.
\newblock {\em Insurance Math. Econom.}, 90:66--79, 2020.

\bibitem{MR2268407}
F.~Maccheroni, M.~Marinacci, and A.~Rustichini.
\newblock Ambiguity aversion, robustness, and the variational representation of
  preferences.
\newblock {\em Econometrica}, 74(6):1447--1498, 2006.

\bibitem{MR2536871}
F.~Maccheroni, M.~Marinacci, A.~Rustichini, and M.~Taboga.
\newblock Portfolio selection with monotone mean-variance preferences.
\newblock {\em Math. Finance}, 19(3):487--521, 2009.

\bibitem{MPY}
M.~A. Milevsky, S.~D. Promislow, and V.~R. Young.
\newblock Killing the law of large numbers: Mortality risk premiums and the
  sharpe ratio.
\newblock {\em J. Risk Insurance}, 73(4):673--686, 2006.

\bibitem{MR2397805}
S.~Peng.
\newblock {$G$}-expectation, {$G$}-{B}rownian motion and related stochastic
  calculus of {I}t\^{o} type.
\newblock In {\em Stochastic analysis and applications}, volume~2 of {\em Abel
  Symp.}, pages 541--567. Springer, Berlin, 2007.

\bibitem{MR2474349}
S.~Peng.
\newblock Multi-dimensional {$G$}-{B}rownian motion and related stochastic
  calculus under {$G$}-expectation.
\newblock {\em Stochastic Process. Appl.}, 118(12):2223--2253, 2008.

\bibitem{MR4019241}
M.~B. Righi.
\newblock A composition between risk and deviation measures.
\newblock {\em Ann. Oper. Res.}, 282(1-2):299--313, 2019.

\bibitem{MR3103448}
R.~T. Rockafellar and S.~Uryasev.
\newblock The fundamental risk quadrangle in risk management, optimization and
  statistical estimation.
\newblock {\em Surv. Oper. Res. Manag. Sci.}, 18(1-2):33--53, 2013.

\bibitem{MR1680267}
T.~Rolski, H.~Schmidli, V.~Schmidt, and J.~Teugels.
\newblock {\em Stochastic processes for insurance and finance}.
\newblock Wiley Series in Probability and Statistics. John Wiley \& Sons, Ltd.,
  Chichester, 1999.

\bibitem{schmeck2019mortality}
M.~D. Schmeck and H.~Schmidli.
\newblock Mortality options: The point of view of an insurer.
\newblock {\em Insurance: Mathematics and Economics}, 96:98 -- 115, 2021.

\bibitem{MR1817231}
M.~Schweizer.
\newblock From actuarial to financial valuation principles.
\newblock {\em Insurance Math. Econom.}, 28(1):31--47, 2001.

\bibitem{MR2746175}
H.~M. Soner, N.~Touzi, and J.~Zhang.
\newblock Martingale representation theorem for the {$G$}-expectation.
\newblock {\em Stochastic Process. Appl.}, 121(2):265--287, 2011.

\bibitem{MR2842089}
H.~M. Soner, N.~Touzi, and J.~Zhang.
\newblock Quasi-sure stochastic analysis through aggregation.
\newblock {\em Electron. J. Probab.}, 16:no. 67, 1844--1879, 2011.

\bibitem{MR1604936}
S.~S. Wang, V.~R. Young, and H.~H. Panjer.
\newblock Axiomatic characterization of insurance prices.
\newblock {\em Insurance Math. Econom.}, 21(2):173--183, 1997.

\bibitem{doi:10.1002/9780470012505.tap027}
V.~R. Young.
\newblock {\em Premium Principles}.
\newblock Encyclopedia of Actuarial Science, 2006.

\end{thebibliography}
\end{document}